\newcommand{\indep}{\;\, \rule[0em]{.03em}{.67em} \hspace{-.25em}\rule[0em]{.65em}{.03em} \hspace{-.25em}\rule[0em]{.03em}{.67em}\;\,}
\newcommand{\sumi}{\ensuremath{\sum_{i=1}^{n}}}
\newcommand{\sumj}{\ensuremath{\sum_{j=1}^{n}}}
\newtheorem{Th}{\underline{\bf Theorem}}
\newtheorem{Rem}{\underline{\bf Remark}}
\newtheorem{Lem}{\underline{\bf Lemma}}
\newtheorem{Cor}{\underline{\bf Corollary}}
\def\boxit#1{\vbox{\hrule\hbox{\vrule\kern6pt  \vbox{\kern6pt#1\kern6pt}\kern6pt\vrule}\hrule}}
\def\bse{\begin{eqnarray*}}
\def\ese{\end{eqnarray*}}
\def\be{\begin{eqnarray}}
\def\ee{\end{eqnarray}}
\def\bsq{\begin{equation*}}
\def\esq{\end{equation*}}
\def\bq{\begin{equation}}
\def\eq{\end{equation}}
\def\var{\hbox{var}}
\def\cov{\hbox{cov}}
\def\wh{\widehat}
\def\wt{\widetilde}
\def\mR{\mathbb{R}}
\def\n{\nonumber}
\def\cov{\mbox{cov}}
\def\sumi{\sum_{i=1}^n}
\def\sumj{\sum_{j=1}^n}
\def\trans{^{\rm T}}
\def\calT{\mbox{ $\mathcal{T}$}}
\def\ba{{\boldsymbol\alpha}}
\def\bb{{\boldsymbol\beta}}
\def\bd{{\boldsymbol\delta}}
\def\bg{{\boldsymbol\gamma}}
\def\bphi{{\boldsymbol\phi}}
\def\bzeta{{\boldsymbol\zeta}}
\def\A{{\bf A}}
\def\a{{\bf a}}
\def\B{{\bf B}}
\def\c{{\bf c}}
\def\C{{\bf C}}
\def\D{{\bf D}}
\def\V{{\bf V}}
\def\g{{\bf g}}
\def\f{{\bf f}}
\def\b{{\bf b}}
\def\I{{\bf I}}
\def\bS{{\bf S}}
\def\U{{\bf U}}
\def\X{{\bf X}}
\def\x{{\bf x}}
\def\bSig{{\bf \Sigma}}
\def\log{\hbox{log}}
\def\squarebox#1{\hbox to #1{\hfill\vbox to #1{\vfill}}}
\def\btheta{{\boldsymbol \theta}}
\def\0{{\bf 0}}
\def\1{{\bf 1}}
\def\var{\hbox{var}}
\def\cov{\hbox{cov}}
\def\pr{\hbox{pr}}
\def\wh{\widehat}
\def\wt{\widetilde}
\def\log{\hbox{log}}
\def\logit{{\mbox{logit}}}
\begin{document}
\thispagestyle{empty}
\baselineskip=28pt
\vskip 5mm
\begin{center}
{\LARGE{\bf Covariate balancing for causal inference on categorical and continuous treatments}}
\vskip 1mm 
{\bf 
Seong-ho Lee$^1$, Yanyuan Ma$^1$ and Xavier de Luna$^2$}

$^1$Department of Statistics, Pennsylvania State University, USA \\
and
$^2$Department of Statistics/USBE, Ume\r{a} University, Sweden
\vspace{.55cm} 
\end{center}
\baselineskip=22pt

\begin{quotation}
	\noindent {\it Abstract:} We propose novel
        estimators for categorical and continuous treatments by
        using an optimal 
covariate balancing strategy for inverse probability weighting. 
The resulting estimators are shown to be consistent
  and asymptotically normal for causal contrasts of interest, either when the model explaining
 treatment assignment is correctly
 specified, or when the correct set of bases for the outcome models
 has been chosen and the assignment model is
 sufficiently rich. 
For the categorical treatment case, we show that the estimator
 attains the semiparametric efficiency bound when all models are correctly specified.  For
 the continuous case, the causal parameter of interest is a
 function of the treatment dose. The latter is not parametrized and 
the estimators proposed are shown to have bias 
and variance of the classical nonparametric rate. 
Asymptotic
 results are complemented with simulations illustrating the finite
 sample properties. Our analysis of a data set suggests a nonlinear
 effect of BMI  on the decline in self reported health.
\end{quotation}
\textit{Key words}: Average causal effects; dose-response; double robust;
semiparametric efficiency bound.

\section{Introduction}

Encouraged by the recent booming development of the causal inference literature, 
we devise and study a novel inference tool for categorical and continuous
treatments by using covariate balancing strategies for inverse probability weighting
\citep[e.g.,][]{ImaiRatkovic:14, wangzubi:20,
  Fanetal:2018}. Our study is built on the
   fundemental idea on optimal covariate balancing of
  \cite{Fanetal:2018}, while we overcome additional methodological and
  theoretical challenges.
  
  When estimating a causal effect
  on an outcome, weighting based on the propensity score (model for the probability
  of the treatment given observed pre-treatment covariates)
  is often used to construct optimal estimators by an augmentation using fitted models for the outcome given the covariates. These augmented inverse probability weighting estimators have robustness
  properties to the specification of models used, and are locally
  efficient \citep[e.g.,][]{robins_rot1995,Scharfsteinetal:1999}. 
A vast majority of the literature on causal inference have focused on binary
treatments, i.e. where the causal parameter of interest is a
contrast between two treatments.
Nevertheless, there is an increasing
interest in multi-valued treatments \cite[e.g.,][]{fong2018,kennedyetal:2017,yangetal:16} as often encountered in applied
work, both in the medical and social sciences.
Causal
effects of categorical treatment were formalized by, e.g.,
\cite{Imbens:2000} and \cite{Robins:2000}, while \cite{CATTANEO2010}
deduced the semiparametric efficiency bound; see also
\cite{yangetal:16} for a review.  
Causal effects of continuous treatments were formalized in, e.g.,
\cite{Robins:2000}, \cite{vdlaanrobins:03}, \cite{hiranoimbens:04} and
\cite{galvaowang:15}. In contrast to previous works,
\cite{kennedyetal:2017} proposed a double robust estimation strategy
avoiding parametric specification of the dose-response curve. 
 
We contribute to the somewhat less rich literature on  robust estimation
for categorical and continuous treatments by using an
estimation strategy based on 
covariate balancing propensity score estimation for inverse probability weighting
\citep[e.g.,][]{ImaiRatkovic:14,fong2018}. 
\cite{Fanetal:2018} recently obtained key results in the binary treatment case by
  specifying which covariate functions should be balanced for
  efficient inference: the 
  propensity score model should be fitted through balancing a set of bases for the
  outcome models in the space spanned by the covariates.  We provide
corresponding results to the categorical and
  continuous treatment cases, hence completes the story. In particular, the procedures we
proposed balance the ``most suitable''   
functions of the covariates when the propensity score is correctly specified, in the sense that 
they minimize the variability of the causal effect estimation.
 When
the propensity score is misspecified and the outcome basis functions are
correct, the procedure looks for an approximate balance by minimizing the
squared bias of the resulting estimator.
As other recent proposals for the binary treatment
case \citep[][]{wangzubi:20,atheyetal:18,zubi:15,wongchan:17}, the method
presented here does not necessarily try to achieve exact balance when this
is not possible, although in practice exact balance can always be
targeted by enriching the assignment model.

For both the categorical and continuous treatment case, the
 proposed estimators are shown to be robust, i.e. consistent and asymptotically normal for
 causal contrasts of interest, either when the model explaining
 treatment assignment is correctly
 specified, or when the correct set of bases for the outcome models
 has been chosen and the propensity score model is
 sufficiently rich.
 For the
 categorical treatment case, we show that the estimator proposed
 attains the semiparametric efficiency bound when both the treatment
 assignment model and the outcome basis are correctly specified. For
 the continuous case, the causal parameter of interest is a
 function. The latter is not parametrized 
 and the estimators proposed are shown to have bias 
and variance of the classical nonparametric order
under typical regularity
 conditions, hence with a usual bias-variance trade-off.

The rest of the paper is organized as follows. Sections \ref{sec:cat}
and \ref{sec:cont} deal with the categorical and the continuous treatment
cases, respectively. In both sections, inverse probability weighting
estimators are introduced, where a working model for the generalized
propensity score is estimated by balancing basis functions for the
outcome models. We establish the theoretical
properties of the estimators. Simulation studies are conducted in
Section \ref{sec:simu} to illustrate the finite sample performance of
our methods. In Section \ref{sec:app}, we estimate the
dose-response curve of BMI  
  on the decline in self reported health from baseline to a 9 year
  follow up in a population of ages 50 or older. Section 
\ref{sec:disc} concludes the paper, while all proofs are relegated to the
Appendix.

\section{Categorical treatments}\label{sec:cat}
\subsection{Balancing scores and preliminaries on estimation}
Consider $K+1$ treatments, $A=0, 1,  \dots, K$, and their respective
potential outcomes $Y^0, \dots, Y^K$. 
We observe a random sample $(A_i, Y_i, \X_i), i=1, \dots, n$, where
we assume
$Y_i=Y_i^k$ if $A_i=k$, and $\X_i\in \mR^d$ is a vector of
pre-treatment covariates. We also assume ignorability of the treatment
assignment, i.e. 
$E(Y_i^k\mid\X_i,A_i)=E(Y_i^k\mid\X_i)\equiv m(k,\X_i)$ and
$\pr(A_i=k\mid\X_i=\x)\equiv\pi_0(k,\x)>\delta>0$ for all $k\in\{0,
1,\dots, K\}$ and all $\x$, where $\pi_0(k,\x)$ is named
generalized propensity score in the literature
\citep{Imbens:2000}. 

Let $\theta_k\equiv E(Y_i^k)$ for $k=0, 1,\dots, K$ be the average
response to the different treatment levels. The parameters of
interest are typically average causal effects between treatment
levels, i.e. causal contrasts such as $\theta_k-\theta_0$, if $k=0$ is a
treatment level of 
reference. We consider a parametric
working model $\pi(k,\x,\bb)$ for $\pi_0(k,\x)$, with $\bb\in \mR^p$,
and vectors of basis 
functions, $\B(k,\X):\mR^{d+1}\to \mR^q$, aiming at spanning
$m(k,\x)$. We assume $q$ does not depend on $k$ for notational simplicity.
Thus, correct specification will imply that there exists a value $\bb_0$ with
\be\label{eq:pispec}
\pi(k,\x,\bb_0)=\pi_0(k,\x),
\ee
 and there exists $\ba=(\ba_0\trans, \dots,\ba_K\trans)\trans$ with
\be\label{eq:mspec}
\ba_{k}\trans\B(k,\x)=m(k,\x),
\ee
  for all $k$ and all $\x$. Misspecification, i.e. situations when
(\ref{eq:pispec}) or (\ref{eq:mspec})   
does not hold for any value of $\bb$ and
  $\ba$, will also be considered in the sequel. Note that one of the
  advantages of the herein studied balancing approach is that the
  parameter $\ba$ does not need to be known or estimated. We hence do not use a subscript $0$ on
  $\ba$ and $m(\cdot)$ to distinguish true parameter value and correct
  model since this will be clear from the context.

For estimating $\theta_k$ under the above assumptions one needs to
control for the covariates $\X_i$ by using one or both 
working models. In particular, $\pi(k,\x,\bb)$ is a balancing score in
the sense that $\X_i \indep A_i \mid \pi(k,\x,\bb_0)$ under
(\ref{eq:pispec}) \citep{RR1983}.  
 Thus, for the binary case ($K=1$), \cite{ImaiRatkovic:14} proposed to solve
$$\sumi\left\{\frac{I(A_i=1)}{\pi(1,\X_i,\bb)}-\frac{I(A_i=0)}{\pi(0,\X_i,\bb)}\right\}\b(\X_i)=0,$$
where $\b(\X_i)$ is a vector valued function of the covariates. 
Based on the
resulting fitted propensity score $\pi(k,\X_i,\wh\bb)$, an inverse
probability weighting estimator for $\theta_k$ is
\be\label{eq:thetacat}
\wh\theta_k=n^{-1}\sumi
\frac{I(A_i=k)Y_i}{\pi(k,\X_i,\wh\bb)}.
\ee 
Two issues arise regarding the above procedure. One is that if
the propensity score model (\ref{eq:pispec}) is misspecified 
then, $\wh\theta_k$ is generally biased. Two is the choice of $\b(\X)$, which is
largely left unsupervised. 
\cite{Fanetal:2018} overcome these two issues in the binary case ($K=1$),
and proposed an optimal choice for  $\b(\X)$, 
in the sense that the resulting treatment effect estimator 
is consistent when 
(\ref{eq:pispec}) is correct, or when  (\ref{eq:mspec}) is correct
and (\ref{eq:pispec}) has sufficient flexibility, 
and is efficient if both are correct.

We aim to achieve the same kind of optimality and robustness 
in the categorical treatment case. Two different estimators may
be introduced with different 
properties, which we discuss heuristically below, before giving a
formal treatment in the next section. The first possibility to estimate
$\bb$ is to solve the following balancing condition 
\be\label{eq:betacat0}
\sumi\left[ 
\left\{\frac{I(A_i=k)}{\pi(k,\X_i,\bb)}-1\right\}\B(k,\X_i)
-
\left\{\frac{I(A_i=0)}{\pi(0,\X_i,\bb)}-1\right\}\B(0,\X_i)
\right]=\0
\ee 
at all $k=1, \dots, K$, i.e. a system of $qK$ equations.
GMM, as described below, can be used if $qK\geq p$. 
This balancing 
condition is motivated by pushing the bias of the contrast estimator
$\wh\theta_k-\wh\theta_0$ towards zero.   
In fact, it will be shown that the asymptotic bias  of
$\wh\theta_k-\wh\theta_0$ is equal to
$$
E\left[\left\{{I(A_i=k)}/{\pi(k,\X_i,\bb)}-1\right\}m(k,\X_i)-\left\{{I(A_i=0)}/{\pi(0,\X_i,\bb)}-1\right\}m(0,\X_i)\right].
$$

An alternative to setting the bias of $\wh\theta_k-\wh\theta_0$ to
zero for $k=1,\ldots,K$, is to directly put the bias of $\wh\theta_k$ to zero, for
$k=0,\ldots,K$, by separately balancing both terms in
(\ref{eq:betacat0}), i.e. solving the condition 
\be\label{eq:betacat}
\sumi
\left\{\frac{I(A_i=k)}{\pi(k,\X_i,\bb)}-1\right\}\B(k,\X_i)=\0
\ee   
at all $k=0, \dots, K$, i.e. a system of $q(K+1)$ equations. We will use GMM allowing for $q(K+1)\geq p$; see (\ref{eq:gmm}) below. 

The two choices are not necessarily equivalent. In fact, the former choice allows 
for biased estimation of $\wh\theta_k$ with the only aim to estimate the contrast
$\theta_k-\theta_0$ without bias. 
We find that, if $\wh\theta_k$ is indeed biased, then
$\wh\theta_k-\wh\theta_0$ will not be efficient. 
This is because local efficiency holds when the
the fitted propensity score is correctly specified and its parameters
are consistently estimated, which is not the case when
(\ref{eq:betacat}) does not hold. Due to this consideration,
below we focus on solving (\ref{eq:betacat}) and show that the
resulting estimator of $\theta_k$ has, under certain conditions, a
robust property and, when all working models are correctly specified,
reaches the asymptotic semiparametric efficiency 
bound.

\subsection{Asymptotic properties}

We now establish a robustness property and the asymptotic
distribution results of the estimator in
(\ref{eq:thetacat}), where $\bb$ is estimated through covariate
balancing (\ref{eq:betacat}); see Appendix \ref{app:cat} for proofs.
To gain an intuitive understanding of
the robustness property, we can verify that when the propensity score
model is correctly specified, i.e. when
(\ref{eq:pispec}) holds for all $k$ and all $\x$, $\wh\bb$ is
$\sqrt{n}-$consistent under the standard regularity conditions for GMM
estimation \citep{newey:mcfadden:94}, and 
$\pi(k,\x,\wh\bb)\to\pi(k,\x,\bb_0)=\pi_0(k,\x)$ 
in probability as $n$ tends to infinity. The consistency is a
consequence of 
$$E\left[
\left\{{I(A_i=k)}/{\pi(k,\X_i,\bb_0)}-1\right\}\B(k,\X_i)\right]=\0$$
in combination with the regularity conditions, irrespective of whether a correct basis for
the outcome models is specified.
This then leads to the convergence of
\bse
E(\wh\theta_k)=E\left\{
n^{-1}\sumi 
\frac{I(A_i=k)Y_i}{\pi(k,\X_i,\wh\bb)}\right\}
\to
E\left\{
\frac{I(A_i=k)Y_i^k}{\pi_0(k,\X_i)}\right\}
=\theta_k,
\ese
as $n\rightarrow\infty $. 
On the other hand, when the outcome model basis is actually correctly specified, i.e.
when (\ref{eq:mspec}) holds for all $k$ and $\x$, then the propensity
model (\ref{eq:pispec}) does not need be correct 
as long as (\ref{eq:betacat}) has a solution. In such case,
$\wh\bb$ is consistent for some value $\bb^*$, hence $\pi(k,\x,\wh\bb)$ converges to
some function $\pi(k,\x)$ in probability.
We then have
\bse
E(\wh\theta_k)&=&E\left\{
n^{-1}\sumi 
\frac{I(A_i=k)Y_i}{\pi(k,\X_i,\wh\bb)}\right\}
\to E\left\{
\frac{I(A_i=k)Y_i^k}{\pi(k,\X_i)}\right\}\n\\
&=&
E\left[\left\{
\frac{\pi_0(k,\X_i) }{\pi(k,\X_i)}-1\right\}m(k,\X_i)+m(k,\X_i)\right]
=\theta_k,
\ese
as $n\rightarrow\infty $, where the last equality is the result of (\ref{eq:mspec}) and
(\ref{eq:betacat}).

To be more formal, let 
\bse
\f_{ki}(\bb)\equiv\left\{\frac{I(A_i=k)}{\pi(k,\X_i,\bb)}-1\right\}\B(k,\X_i)
,
\ese
 $\f_i(\bb)\equiv\{\f_{0i}(\bb)\trans,
 \dots,\f_{Ki}(\bb)\trans\}\trans$,
 $\V(\bb)\equiv E\{ \f_i(\bb)\f_i(\bb)\trans\}$,
 $\wh\V(\bb)\equiv n^{-1}\sumi \f_i(\bb)\f_i(\bb)\trans$,
$\A(\bb)\equiv E\left\{ \partial\f_i(\bb)/\partial\bb\trans\right\}$
and
$\wh\A(\bb)\equiv n^{-1}\sumi\partial\f_i(\bb)/\partial\bb\trans$.
Further, let $\btheta\equiv(\theta_0, \dots, \theta_K)\trans$, 
$g_{ki}(\bb)\equiv {I(A_i=k)Y_i}/{\pi(k,\X_i,\bb)}
-E\{m(k,\X_i)\}$, $\g_i(\bb)=\{g_{1i}(\bb),\dots,
g_{Ki}(\bb)\}\trans$ and
$\B(\bb)\equiv E\{\partial \g_i(\bb^*)/\partial{\bb^*}\trans\}$.
We solve for 
a solution of (\ref{eq:betacat}) by minimizing
\be\label{eq:gmm}
\{\sumi\f_i(\bb)\}^T\wh\V(\bb)^{-1}\{\sumi\f_i(\bb)\}.
\ee

We will use the following regularity conditions.
\begin{description}
\item[A0.] $\bb^*$ is the unique solution of $E\{\f_i(\bb)\}=0$. 
\item[A1.]  The variance-covariance matrix $\V(\bb^*)$ has bounded
  positive eigenvalues. 
\item[A2.] $\f_i(\bb)$ is differentiable with respect to $\bb$.
\item[A3.] The matrix $\A(\bb^*)$  is bounded and has full column rank.
\item[A4.]	 $\g_i(\bb)$ is differentiable with respect to $\bb$.
\end{description}

These are classical regularity conditions. Condition A0 requires the
existence and uniqueness of a solution, where the uniqueness can be
relaxed to local uniqueness. The existence requirement is automatic
when the $\pi(k,\x,\bb)$ model is correct. In this case
$\bb^*=\bb_0$. It is also natural and standard  when $(K+1)q$, the 
number of equations in $E\{\f_i(\bb)\}$ is not larger than $p$, the
dimension of $\bb$, which is achievable through enriching 
the $\pi(k,\x,\bb)$ model. Thus, regardless of whether $\pi(k,\x,\bb)$ is 
correctly specified or not, we can always justify Condition A0.

\begin{Th}\label{thm:asympt}
Assume that either
(\ref{eq:pispec}) holds for all $k$ and $\x$, or (\ref{eq:mspec}) holds for all $k$ and $\x$.
Then, under regularity conditions A0 to A4,
$n^{1/2}(\wh\btheta-\btheta)$ has asymptotic normal distribution with
mean zero and variance 
\bse
\bSig&=&
\B(\bb^*)
\{\A(\bb^*)\trans\V(\bb^*)^{-1}\A(\bb^*)\}^{-1}\B(\bb^*)\trans
+\C(\bb^*)\\
&&-\B(\bb^*)\{\A(\bb^*)\trans\V(\bb^*)^{-1}\A(\bb^*)\}^{-1}
\A(\bb^*)\trans\V(\bb^*)^{-1}\D(\bb^*)\\
&&-\D(\bb^*)\trans
[\B(\bb^*)\{\A(\bb^*)\trans\V(\bb^*)^{-1}\A(\bb^*)\}^{-1}
\A(\bb^*)\trans\V(\bb^*)^{-1}]\trans,
\ese
where
$\C(\bb^*)\equiv E\{\g_i(\bb^*)^{\otimes2}\}$ and
$\D(\bb^*)\equiv E\{\f_i(\bb^*)\g_i(\bb^*)\trans\}$.
\end{Th}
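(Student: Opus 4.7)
The plan is to combine the standard GMM Bahadur-type expansion for $\wh\bb$ with a Taylor expansion of $\wh\btheta$ viewed as a smooth functional of $\bb$, and then read off the variance.

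First, under A0--A3 the GMM criterion (\ref{eq:gmm}) has a unique population minimizer at $\bb^*$, so standard GMM theory \citep{newey:mcfadden:94} gives $\wh\bb \povr \bb^*$. Before proceeding, I would verify that this $\bb^*$ delivers the right estimand under either scenario of the theorem. Under (\ref{eq:pispec}), $\bb^*=\bb_0$ and $E\{g_{ki}(\bb^*)\}=0$ is immediate. Under (\ref{eq:mspec}) with (\ref{eq:pispec}) possibly misspecified, using $m(k,\X_i)=\ba_k\trans\B(k,\X_i)$ together with ignorability,
\[ E\{g_{ki}(\bb^*)\}=E\left[\left\{\frac{I(A_i=k)}{\pi(k,\X_i,\bb^*)}-1\right\}m(k,\X_i)\right]=\ba_k\trans E\{\f_{ki}(\bb^*)\}=0, \]
where the last equality uses the defining moment $E\{\f_i(\bb^*)\}=\0$. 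This is the heuristic double-robustness argument the authors have already sketched, and it ensures $\bb^*$ annihilates the bias in either scenario.

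Next, I would derive the influence function for $\wh\bb$. Differentiating (\ref{eq:gmm}), expanding $\f_i(\wh\bb)$ around $\bb^*$ via A2, and using A1, A3 together with a uniform LLN to replace $\wh\A(\wh\bb)\trans\wh\V(\wh\bb)^{-1}$ by its limit $\A(\bb^*)\trans\V(\bb^*)^{-1}$, yields
\[ n^{1/2}(\wh\bb-\bb^*)=-\{\A(\bb^*)\trans\V(\bb^*)^{-1}\A(\bb^*)\}^{-1}\A(\bb^*)\trans\V(\bb^*)^{-1}\,n^{-1/2}\sumi\f_i(\bb^*)+o_p(1). \]
Writing $\wh\btheta-\btheta=n^{-1}\sumi\g_i(\wh\bb)$, Taylor expanding in $\bb$ around $\bb^*$, and using A4 together with the LLN $n^{-1}\sumi\partial\g_i(\bb^*)/\partial\bb\trans \povr \B(\bb^*)$ gives
\[ n^{1/2}(\wh\btheta-\btheta)=n^{-1/2}\sumi\g_i(\bb^*)+\B(\bb^*)\,n^{1/2}(\wh\bb-\bb^*)+o_p(1). \]
Substituting the influence expansion above then writes $n^{1/2}(\wh\btheta-\btheta)=n^{-1/2}\sumi\bzeta_i+o_p(1)$ for i.i.d.\ mean-zero vectors $\bzeta_i=\g_i(\bb^*)-\B(\bb^*)M\A(\bb^*)\trans\V(\bb^*)^{-1}\f_i(\bb^*)$, with $M\equiv\{\A(\bb^*)\trans\V(\bb^*)^{-1}\A(\bb^*)\}^{-1}$. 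The multivariate CLT yields asymptotic normality, and computing $\var(\bzeta_i)$ using $\var\{\g_i(\bb^*)\}=\C(\bb^*)$, $\var\{\f_i(\bb^*)\}=\V(\bb^*)$ and $\cov\{\f_i(\bb^*),\g_i(\bb^*)\}=\D(\bb^*)$ collapses the sandwich $\B M\A\trans V^{-1}\,\V\,V^{-1}\A M\B\trans$ to $\B(\bb^*)M\B(\bb^*)\trans$, while the two cross terms produce exactly the $\D$-involving expressions in the stated $\bSig$.

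The main obstacle is really the identifiability/bias reconciliation of the first step: the unique root $\bb^*$ of $E\{\f_i(\bb)\}=\0$ must simultaneously zero $E\{\g_i(\bb^*)\}$ under either (\ref{eq:pispec}) or (\ref{eq:mspec}), which is the substantive content of the theorem beyond textbook GMM asymptotics. A secondary technical point is justifying uniform convergence of $\wh\V(\bb)$ and $\wh\A(\bb)$ in a neighborhood of $\bb^*$ so that the plug-in weighting matrix in (\ref{eq:gmm}) does not affect the limit distribution; this is standard under A1--A3 once compactness and dominated smoothness of $\pi(k,\x,\bb)$ in $\bb$ are in place.
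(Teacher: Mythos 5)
Your proposal is correct and follows essentially the same route as the paper: the GMM influence expansion for $\wh\bb$ (the paper's Lemma \ref{lem:gmm}), the Taylor expansion of $n^{-1/2}\sumi\g_i(\wh\bb)$ around $\bb^*$, and the variance computation for the resulting i.i.d.\ influence function, which yields exactly the stated $\bSig$. The only difference is that you spell out the verification that $E\{\g_i(\bb^*)\}=\0$ under either (\ref{eq:pispec}) or (\ref{eq:mspec}) (via $\ba_k\trans E\{\f_{ki}(\bb^*)\}=\0$ and Condition A0), which the paper's proof invokes by reference to its Section 2.2 heuristics rather than rederiving.
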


Theorem \ref{thm:asympt} highlights
a robust property. On the one hand, 
if the propensity score is correctly specified then
we will have a consistent estimator of the treatment contrast even if
the outcome basis is misspecified.
On the other hand, we can also afford to misspecify the propensity
score model, provided that the outcome basis functions are correctly
specified. In the latter case, Condition A0 plays a pivotal role and
it is crucial to ensure it. 
An example is to use the model $\pi(k,\x,\bb)=\bb_{(k)}\trans\B(k,\x)$,
$k=0, \dots, K$, with $\bb=(\bb_{(0)}\trans,\cdots,\bb_{(K)}\trans)\trans$
so that $\bb$ has length $p=q(K+1)$.  Then 
(\ref{eq:betacat}) is the derivative of the loss function  
 \be\label{eq:samebasis}
 \sum_{i=1}^n [I(A_i=k)\log \{\bb_{(k)}\trans\B(k,\X_i)\} -
 \bb_{(k)}\trans\B(k,\X_i)],
\ee 
for $k=0, \dots, K$, hence the minimizer is a root of 
(\ref{eq:betacat}).
 The utilization of the same basis of functions for both 
 nuisance models is used in \cite{wangzubi:20} as well. To further
 accommodate one's favorite propensity model, we can also make linear
 combination of this model and any candidate model in mind.

The asymptotic variance simplifies greatly when all models are
correctly specified, and a local efficiency result is obtained. 

\begin{Cor}\label{cor:asympt}
Assume that (\ref{eq:pispec}) and (\ref{eq:mspec}) hold for all $k$ and $\x$ and
let $\var(Y_i^k\mid\X_i)=v(k,\X_i)$. Then, under the regularity
conditions of Theorem \ref{thm:asympt},  
 $n^{1/2}(\wh\btheta-\btheta)$ has asymptotic normal distribution with
 mean zero and variance 
 $$\bSig=\C(\bb_0)-\B(\bb_0)
\{\A(\bb_0)\trans\V(\bb_0)^{-1}\A(\bb_0)\}^{-1}\B(\bb_0)\trans,$$
where
\bse
\A_k(\bb_0)&=&E\left\{
-\frac{\B(k,\X_i)\pi'_\bb(k,\X_i,\bb_0)\trans
}{\pi(k,\X_i,\bb_0)}\right\}, \\
\B_{k}(\bb_0)&=&E\left\{
-\frac{m(k,\X_i)\pi'_\bb(k,\X_i,\bb_0)\trans
}{\pi(k,\X_i,\bb_0)}\right\}, \\
\V_{kl}(\bb_0)&=&E\left[
\left\{\frac{I(k=l)}{\pi(k,\X_i,\bb_0)}-1\right\}\B(k,\X_i)   \B(l,\X_i)\trans\right],  \\
\C_{kl}(\bb_0)&=&E\left\{I(k=l) \frac{m(k,\X_i)^2+v(k,\X_i)}{\pi(k,\X_i,\bb_0)}-m(k,\X_i)m(l,\X_i)\right\}\\
&&+E\left([m(k,\X_i)-E\{m(k,\X_i)\}][m(l,\X_i)-E\{m(l,\X_i)\}]\right).
\ese
\end{Cor}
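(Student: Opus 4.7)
The plan is to derive the simplified form by starting from Theorem~\ref{thm:asympt} and showing that the three cross terms involving $\D(\bb^*)$ collapse to a single sign change. Writing $M\equiv\A(\bb_0)\trans\V(\bb_0)^{-1}\A(\bb_0)$, the expression from Theorem~\ref{thm:asympt} at $\bb^*=\bb_0$ reads
\begin{equation*}
\bSig=\C(\bb_0)+\B(\bb_0)M^{-1}\B(\bb_0)\trans-\B(\bb_0)M^{-1}\A(\bb_0)\trans\V(\bb_0)^{-1}\D(\bb_0)-\D(\bb_0)\trans\V(\bb_0)^{-1}\A(\bb_0)M^{-1}\B(\bb_0)\trans,
\end{equation*}
so it suffices to prove the single algebraic identity
\begin{equation*}
\A(\bb_0)\trans\V(\bb_0)^{-1}\D(\bb_0)=\B(\bb_0)\trans;
\end{equation*}
once this is established, both subtracted terms become $\B(\bb_0)M^{-1}\B(\bb_0)\trans$ and combine with the $+\B(\bb_0)M^{-1}\B(\bb_0)\trans$ to leave $-\B(\bb_0)M^{-1}\B(\bb_0)\trans$, producing the stated formula.

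To establish this identity I would exploit the correct outcome basis (\ref{eq:mspec}) via the pointwise decomposition
\begin{equation*}
g_{li}(\bb_0)=\ba_l\trans\f_{li}(\bb_0)+R_{li},\qquad R_{li}\equiv\{m(l,\X_i)-\theta_l\}+\frac{I(A_i=l)\{Y_i^l-m(l,\X_i)\}}{\pi_0(l,\X_i)},
\end{equation*}
obtained by inserting $\pm m(l,\X_i)I(A_i=l)/\pi_0(l,\X_i)$ into $g_{li}(\bb_0)$ and invoking $\ba_l\trans\B(l,\X_i)=m(l,\X_i)$. Two orthogonalities then drive the argument. First, under the correct propensity score (\ref{eq:pispec}), $E\{\f_{ki}(\bb_0)\mid\X_i\}=\0$, which immediately yields $E[\f_{ki}(\bb_0)\{m(l,\X_i)-\theta_l\}]=\0$. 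Second, ignorability gives $E[I(A_i=l)\{Y_i^l-m(l,\X_i)\}\mid\X_i]=0$, and the disjoint-indicator identities $I(A_i=k)I(A_i=l)=0$ for $k\neq l$ and $I(A_i=k)^2=I(A_i=k)$ jointly imply $E[\f_{ki}(\bb_0)I(A_i=l)\{Y_i^l-m(l,\X_i)\}/\pi_0(l,\X_i)]=\0$ for every pair $(k,l)$. Together these yield $\D_{kl}(\bb_0)=\V_{kl}(\bb_0)\ba_l$, hence $\V(\bb_0)^{-1}\D(\bb_0)$ has $(k,l)$ block equal to $\ba_l$ if $k=l$ and $\0$ otherwise. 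Pre-multiplying by $\A(\bb_0)\trans$ produces a $p\times K$ matrix whose $l$-th column equals $\A_l(\bb_0)\trans\ba_l$, and the explicit form of $\A_l(\bb_0)$ combined with $\ba_l\trans\B(l,\X_i)=m(l,\X_i)$ identifies this column with $E\{-\pi'_\bb(l,\X_i,\bb_0)m(l,\X_i)/\pi_0(l,\X_i)\}=\B_l(\bb_0)\trans$, completing the identity.

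The closed-form expressions for $\A_k(\bb_0),\B_k(\bb_0),\V_{kl}(\bb_0),\C_{kl}(\bb_0)$ follow routinely from differentiating $\f_{ki}(\bb)$ and $g_{li}(\bb)$ in $\bb$, evaluating at $\bb_0$, and taking iterated expectations using $E\{I(A_i=k)\mid\X_i\}=\pi_0(k,\X_i)$, $E\{I(A_i=k)Y_i\mid\X_i\}=\pi_0(k,\X_i)m(k,\X_i)$, and $E\{I(A_i=k)Y_i^2\mid\X_i\}=\pi_0(k,\X_i)\{m(k,\X_i)^2+v(k,\X_i)\}$ together with the mutual exclusivity of treatment indicators. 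The principal technical hurdle is the orthogonality $E\{\f_{ki}(\bb_0)R_{li}\}=\0$ uniformly in $(k,l)$: this step genuinely requires both specifications at once, with (\ref{eq:pispec}) killing the $m(l,\X_i)-\theta_l$ piece and ignorability paired with the disjoint-indicator identity killing the residual piece. Once this orthogonality is in hand, the remaining manipulations are linear algebra on the Theorem~\ref{thm:asympt} variance expression.
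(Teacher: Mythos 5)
Your proposal is correct and follows essentially the same route as the paper: the paper likewise computes $\D_{kl}(\bb_0)$ explicitly and invokes the two identities $\ba_k\trans\A_k(\bb_0)=\B_k(\bb_0)$ and $\V_{kl}(\bb_0)\ba_l=\D_{kl}(\bb_0)$ (equivalently, your $\A(\bb_0)\trans\V(\bb_0)^{-1}\D(\bb_0)=\B(\bb_0)\trans$) to collapse the Theorem~\ref{thm:asympt} variance to $\C(\bb_0)-\B(\bb_0)\{\A(\bb_0)\trans\V(\bb_0)^{-1}\A(\bb_0)\}^{-1}\B(\bb_0)\trans$. Your decomposition $g_{li}(\bb_0)=\ba_l\trans\f_{li}(\bb_0)+R_{li}$ with the orthogonality $E\{\f_{ki}(\bb_0)R_{li}\}=\0$ is just a slightly different way of arriving at the same computation of $\D_{kl}(\bb_0)$ via iterated expectations, so the argument matches the paper's proof in substance.
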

\begin{Rem}\label{rem:varcat}
	The variance $\bSig$ may be estimated without knowing nor
estimating $\ba$, by  approximating  the original definitions of the matrices
involved, i.e.   
$\B(\bb_0)\equiv E\{\partial \g_i(\bb_0)/\partial{\bb_0}\trans\}$ and
$\C(\bb_0)\equiv E\{\g_i(\bb_0)^{\otimes2}\}$, instead of the
expression involving $m(\cdot)$ and 
$v(\cdot)$ given in Corollary~\ref{cor:asympt}.
\end{Rem}

\begin{Cor}\label{cor:samedim}
Under the assumptions of Corollary \ref{cor:asympt},  the variance
of $\wh\btheta$ attains the semiparametric efficiency bound  
$
\bSig_{\rm eff},
$
where the $(k,l)$ entry of $\bSig_{\rm eff}$ is
\bse
\bSig_{\rm eff,k,l}=I(k=l)E\{v(k,\X)/\pi(k,\X)\}
+E([m(k,\X)-E\{m(k,\X)\}][m(l,\X)-E\{m(k,\X)\}]).
\ese
\end{Cor}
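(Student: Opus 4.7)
The plan is to build directly on the simplified variance formula $\bSig = \C(\bb_0) - \B(\bb_0)\{\A(\bb_0)\trans \V(\bb_0)^{-1} \A(\bb_0)\}^{-1}\B(\bb_0)\trans$ obtained in Corollary \ref{cor:asympt}, together with the explicit closed forms of $\A_k$, $\B_k$, $\V_{kl}$ and $\C_{kl}$ recorded there. The aim is to collapse the quadratic sandwich into a form that cancels the excess pieces of $\C(\bb_0)$ and leaves exactly $\bSig_{\rm eff}$.

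The key algebraic reduction exploits the correct outcome specification $m(k,\x) = \ba_k\trans\B(k,\x)$. Substituting into the expectation defining $\B_k$ immediately gives $\B_k(\bb_0) = \ba_k\trans\A_k(\bb_0)$. Let $\boldsymbol{\Lambda}\equiv\diag(\ba_0,\ldots,\ba_K)$ denote the $q(K+1)\times(K+1)$ block-diagonal matrix whose $k$-th column block is $\ba_k$; stacking the previous identity over $k$ yields the compact form $\B(\bb_0) = \boldsymbol{\Lambda}\trans\A(\bb_0)$. Under the just-identified regime $p = (K+1)q$ implicit in Conditions A0 and A3 (and reflected in the label ``samedim''), $\A(\bb_0)$ is square and invertible, so $\{\A(\bb_0)\trans\V(\bb_0)^{-1}\A(\bb_0)\}^{-1} = \A(\bb_0)^{-1}\V(\bb_0)\A(\bb_0)^{-\trans}$ and the sandwich telescopes to
\bsq
\B(\bb_0)\{\A(\bb_0)\trans \V(\bb_0)^{-1} \A(\bb_0)\}^{-1}\B(\bb_0)\trans = \boldsymbol{\Lambda}\trans \V(\bb_0) \boldsymbol{\Lambda}.
\esq

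Its $(k,l)$ entry equals $\ba_k\trans\V_{kl}(\bb_0)\ba_l$, which, via the explicit $\V_{kl}$ and $m(k,\X) = \ba_k\trans\B(k,\X)$, evaluates to $E\{m(k,\X)^2[1/\pi(k,\X,\bb_0)-1]\}$ when $k=l$ and to $-E\{m(k,\X)m(l,\X)\}$ when $k\neq l$. Subtracting these entrywise from the $\C_{kl}$ of Corollary \ref{cor:asympt} produces $\bSig_{{\rm eff},kl}$: on the diagonal the $E\{m(k,\X)^2/\pi(k,\X,\bb_0)\}$ contributions cancel, leaving $E\{v(k,\X)/\pi(k,\X,\bb_0)\} + \var\{m(k,\X)\}$; off the diagonal the $-E\{m(k,\X)m(l,\X)\}$ pieces cancel, leaving $\cov\{m(k,\X),m(l,\X)\}$. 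The main obstacle is precisely the telescoping identity $\B\{\A\trans\V^{-1}\A\}^{-1}\B\trans = \boldsymbol{\Lambda}\trans\V\boldsymbol{\Lambda}$: its validity rests on the invertibility of $\A(\bb_0)$, so the efficiency claim is genuinely tied to the same-dimension setup, and in over-identified settings a nontrivial projection residual would survive and prevent attainment of the semiparametric bound.
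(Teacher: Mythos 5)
Your proposal is correct, and it gets to the answer by a somewhat different mechanism than the paper. The paper's own proof does not start from the sandwich formula of Corollary \ref{cor:asympt}: in the same-dimension (just-identified) case it re-expands the estimator, obtains the influence function $\g_i(\bb_0)-\B(\bb_0)\A(\bb_0)^{-1}\f_i(\bb_0)$, notes that $\B(\bb_0)\A(\bb_0)^{-1}$ acting on $\f_i(\bb_0)$ amounts to replacing $\B(k,\X)$ by $m(k,\X)$, and thereby recognizes the influence function as the augmented-IPW form $I(A=k)\{Y-m(k,\X)\}/\pi(k,\X)+m(k,\X)-E\{m(k,\X)\}$, whose variance is computed directly and matched to the bound derived in the appendix. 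You instead stay at the variance-matrix level: from $\bSig=\C(\bb_0)-\B(\bb_0)\{\A(\bb_0)\trans\V(\bb_0)^{-1}\A(\bb_0)\}^{-1}\B(\bb_0)\trans$, the identity $\B(\bb_0)=\boldsymbol{\Lambda}\trans\A(\bb_0)$ with $\boldsymbol{\Lambda}=\diag(\ba_0,\dots,\ba_K)$ and the invertibility of the square $\A(\bb_0)$ collapse the sandwich to $\boldsymbol{\Lambda}\trans\V(\bb_0)\boldsymbol{\Lambda}$, and your entrywise cancellation against $\C_{kl}(\bb_0)$ checks out exactly (your block-diagonal $\boldsymbol{\Lambda}$ is in fact a cleaner rendering of the paper's $\I_{K+1}\otimes\ba\trans$, which is needed when the $\ba_k$ differ across $k$). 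Both arguments rest on the same two ingredients --- correct outcome bases giving $\ba_k\trans\A_k(\bb_0)=\B_k(\bb_0)$, and the same-dimension setup making $\A(\bb_0)$ invertible, which you rightly flag as essential --- so the difference is largely presentational: the paper's influence-function route makes the ``attains the bound'' claim transparent, since the influence function visibly coincides with the efficient influence function $\bphi-\c$ constructed in its efficiency-bound section, whereas your computation establishes only that the asymptotic variance equals the displayed matrix; like the paper's corollary proof, you must still invoke the separate tangent-space derivation of the bound (or the known results of Hahn and Cattaneo) to certify that this matrix is indeed the semiparametric efficiency bound, a step your write-up takes for granted. Your closing remark about a surviving projection residual in over-identified settings is a sensible aside but is not needed for the corollary as stated.
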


\section{Continuous treatments}\label{sec:cont}
\subsection{Balancing scores and preliminaries on estimation}

We now consider a continually valued treatment $A$, say taking values $a$
in $[0,1]$. In this case, it is 
reasonable to assume that the potential outcome $Y^a$ changes with $a$
smoothly. We write $Y^a$ as $Y(a)$ in a more conventional
notation. Note that the observed outcome for the $i$th observation,
$Y_i$, is assumed to be
$Y_i(a_i)$ when we observe $A_i=a_i$. 
We  observe a random sample $(A_i, Y_i, \X_i), i=1, \dots, n$, where $\X_i\in \mR^d$ is a
vector of pre-treatment covariates observed for all units. Following
the literature convention, we assume ignorability of the treatment assignment, in the 
sense that
$E\{Y_i(a)\mid\X_i,A_i\}=E\{Y_i(a)\mid\X_i\}$, and the generalized
propensity score is the conditional probability 
density function of the continuous treatment $A_i$ given the covariates $\X_i$:  
$\pi_0(a,\x)\equiv f_{A\mid\X}(a,\x)>\delta>0$ for all
$a\in[0,1]$ and all $\x$. We write the expected conditional
potential outcome as  
$m(a,\x)\equiv E\{Y_i(a)\mid\X_i=\x\}$.

In such case, the parameter  of interest
is the treatment response function or the dose-response
function, denoted as $\theta(a)=E\{Y_i(a)\}$ for
$a\in[0,1]$.  The  average causal effects between two treatment doses, say
$a$ and $b$ are obtained by taking their contrast
$\theta(a)-\theta(b)$. We consider a parametric working model
$\pi(a,\x,\bb)$ for the propensity score $\pi_0(a,\x)$,
where $\bb\in \mR^p$,
and consider a set of basis functions $\B(a,\x):\mR^{d+1}\to \mR^q$
aiming at spanning $m(a,\x)$. 
Thus, correctly specified situations will be such that there exists
$\bb_0$ so that
\be\label{eq:contpispec}
\pi(a,\x,\bb_0)=\pi_0(a,\x),
\ee
 and there exists $\ba$ such that
\be\label{eq:contmspec}
\ba\trans\B(a,\x)=m(a,\x),
\ee
for all $a\in[0,1]$ and all $\x$. Misspecification, i.e. situations
where one of (\ref{eq:contpispec}) and (\ref{eq:contmspec})
does not hold, will be allowed in the sequel.
 
The balancing consideration then leads us to the condition
\bse
\sumi\left[ 
\left\{\frac{K_l(A_i-a)}{\pi(a,\X_i,\bb)}-1\right\}\B(a,\X_i)
-
\left\{\frac{K_l(A_i-b)}{\pi(b,\X_i,\bb)}-1\right\}\B(b,\X_i)
\right]=\0
\ese 
for two arbitrary $a,b$ values in $[0,1]$.
Following the same considerations as in Section \ref{sec:cat},
we strengthen the above requirement and consider
the balancing equations
\be\label{eq:betacon1}
\sumi
\left\{\frac{K_l(A_i-a)}{\pi(a,\X_i,\bb)}-1\right\}\B(a,\X_i)=\0
\ee 
at all $a\in[0,1]$. Here, $K_l(\cdot)=l^{-1}K(\cdot/l)$, where
 $K(\cdot)$ is a kernel function and $l$ is a bandwidth. 
 Practically, we propose to solve (\ref{eq:betacon1}) at
a set of chosen $a$ values, typically those observed for $A_i$,
and  minimize
\be\label{eq:betaconint}
\sumj \left\| \sumi\left[
\left\{\frac{K_l(A_i-A_j)}{\pi(A_j,\X_i,\bb)}-1\right\}\B(A_j,\X_i)\right]
\right\|_2^2 \{\sumi K_l(A_i-A_j)\}
\ee
with respect to $\bb$ to get $\wh\bb$.
Once we obtain
 $\wh\bb$, we estimate the causal parameter $\theta(a)$ with an
 inverse probability weighting estimator
\be\label{eq:thetacon}
\wh\theta(a)=n^{-1}\sumi
\frac{K_h(A_i-a)Y_i}{\pi(a,\X_i,\wh\bb)},
\ee
for any $a$ within the range of observed values for $A_i$.
Here, $h$ is a bandwidth. 

\begin{Rem}\label{rem:con}
The nonparametric estimator (\ref{eq:thetacon}) can be viewed as 
an approximation of 
\bse
\frac{n^{-1}\sumi  Y_iK_h(A_i-a)/\pi(A_i,\X_i,\wh\bb)}
{n^{-1}\sumi  K_h(A_i-a)/\pi(A_i,\X_i,\wh\bb)},
\ese
which is
the
solution to 
\bse
\min_c \sumi \frac{(Y_i-c)^2K_h(A_i-a)}{\pi(A_i,\X_i,\wh\bb)}.
\ese
Thus, we can understand (\ref{eq:thetacon}) as a weighted local
constant estimator of $\theta(a)$. Similar to the generalization from
local constant to local polynomial estimators in nonparametrics, we can 
also generalize (\ref{eq:thetacon}) to more sophisticated
versions. For example, 
through obtaining $\wh c_0$ from
\bse
\min_{c_0, c_1} \sumi \frac{\{Y_i-c_0-c_1(A_i-a)\}^2K_h(A_i-a)}{\pi(A_i,\X_i,\wh\bb)},
\ese
we can obtain the weighted local linear estimator of $\theta(a)$.
\end{Rem}

\subsection{Asymptotic properties}

We now study  the limiting properties of the estimator (\ref{eq:thetacon})
using (\ref{eq:betaconint}); see Appendix \ref{app:cont} for proofs.  
Denote by $\bb^*$ the probability limit of $\wh\bb$. If model (\ref{eq:contpispec})
is correct, $\bb^*=\bb_0$, otherwise $\bb^*$ is the value that 
minimizes (\ref{eq:betaconint}) at the population level, i.e. it minimizes
\be \label{eq:obj}
E_{j}\left(\left\|E_i\left[
\left\{\frac{K_l(A_i-A_j)}{\pi(A_j,\X_i,\bb)}-1\right\}\B(A_j,\X_i)
\right]
\right\|_2^2 \{\sumi K_l(A_i-A_j)\}
\right)
\ee
with respect to $\bb$. Here $E_j$ means taking expectation of
the $j$th observation. We list the following regularity conditions. 
\begin{description}
\item[C0.] $\bb^*$ is the unique solution of 
 $E\left[
\left\{\frac{\pi_0(a,\X)}{\pi(a,\X,\bb)}-1\right\}\B(a,\X) 
\right]=\0$.

\item[C1.] The kernel function $K(\cdot)\ge0$ is bounded, 
 twice differentiable with bounded first derivative,
symmetric and has support on $(-1,1)$. It satisfies
$\int_{-1}^1K(t)dt=1$. 
 
\item[C2.] The bandwidth $l$ satisfies $nl^4\to0$ and $nl^2\to\infty$.
The bandwidth $h$ satisfies $h\to0$ and $nh\to\infty$.

\item[C3.] The basis function $\B(a,\x)$ is bounded.

\item[C4.] The propensity score $\pi(a,\x,\bb)$ is differentiable
with respect to $\bb$ and $a$, is bounded away from
zero, and its derivative with respect to $a$ is bounded.

\item[C5.] $m(a,\X_i)$ is bounded, twice differentiable with respect to $a$,
 and the first derivative is bounded.

\item[C6.]  $\sigma^2(A_i, \X_i)\equiv\var(Y_i\mid A_i, X_i)$ is bounded.
\end{description}
These are typical regularity conditions.
Similar to Condition A0 in the categorical treatment
    case, the uniqueness requirement in Condition C0 can be relaxed to
    local uniqueness. Moreover, with finite samples, C0 can be translated to:
   $\bb^*$ is the unique solution of
    $E_i\left[
\left\{\frac{K_l(A_i-A_j)}{\pi(A_j,\X_i,\bb)}-1\right\}\B(A_j,\X_i)
\right]=\0$ for $j=1,\ldots,n$, which is easier to fullfil.
       The existence of $\bb^*$ is guaranteed when the
    propensity model $\pi(a,\x,\bb)$ is correctly specified, and is a standard
    requirement when the number of equations $qn$ is not larger than
    the length of $\bb$. Thus, in the situation where we are not confident
    that a correct propensity model is used, we can always enrich the
    model to accommodate Condition C0.
We start by giving the convergence rate of $\wh\bb$.

\begin{Lem}\label{lem:betaconv}
Denote by $\bb^*$ the probability limit of $\wh\bb$. If model (\ref{eq:contpispec})
is correct, $\bb^*=\bb_0$, otherwise $\bb^*$ is the value that 
minimizes (\ref{eq:obj}).
Under regularity conditions C0 to C4, $\wh\bb-\bb^*=O_p(n^{-1/2})$.
\end{Lem}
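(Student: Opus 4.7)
The plan is a standard M-estimation rate-of-convergence argument. Denote the criterion in (\ref{eq:betaconint}) by
\[
Q_n(\bb) = \sumj T_j\,\|S_j(\bb)\|_2^2, \quad
S_j(\bb) = \sumi\left\{\frac{K_l(A_i-A_j)}{\pi(A_j,\X_i,\bb)}-1\right\}\B(A_j,\X_i), \quad
T_j = \sumi K_l(A_i-A_j),
\]
so that $\wh\bb$ satisfies the first-order condition $\sumj T_j [\partial S_j(\wh\bb)/\partial\bb]\trans S_j(\wh\bb) = \0$. I would first establish consistency $\wh\bb\povr\bb^*$ by a uniform law of large numbers: under C1--C4, $n^{-4}Q_n(\bb)$ converges uniformly on compact sets to a deterministic functional that, by C0, is uniquely minimized at $\bb^*$ with value zero. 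This step proceeds by replacing kernel-smoothed sample averages with their $l\to 0$ population limits, incurring $O(l^2)$ errors that C2 renders negligible.

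Next, a Taylor expansion of $S_j$ about $\bb^*$ combined with consistency yields the standard linearization
\[
\wh\bb - \bb^* = -H_n^{-1} M_n + o_p(\|\wh\bb-\bb^*\|),
\]
where $H_n = \sumj T_j [\partial S_j(\bb^*)/\partial\bb]\trans [\partial S_j(\bb^*)/\partial\bb]$ and $M_n = \sumj T_j [\partial S_j(\bb^*)/\partial\bb]\trans S_j(\bb^*)$. It thus suffices to show $n^{-4}H_n \povr H$ with $H$ non-singular and $n^{-4}M_n = O_p(n^{-1/2})$. The Hessian piece is routine: inside each summand of $H_n$ the three inner sample averages factorize across independent indices, and C2--C4 combined with standard kernel expansions deliver convergence to a matrix of the form $H = E\{f_A(A)^2\, G(A,\bb^*)\trans G(A,\bb^*)\}$, with $G(a,\bb) = -E\{\pi_0(a,\X)\pi'_\bb(a,\X,\bb)\trans\B(a,\X)/\pi(a,\X,\bb)^2\}$; non-singularity is inherited from the uniqueness in C0.

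The main obstacle is controlling $n^{-4}M_n$, a four-index average over $(j,i_1,i_2,i_3)$ in which each of the three inner sums is kernel-weighted. Its expectation reduces to the kernel-smoothing bias of $h(a,\bb^*) := E[\{\pi_0(a,\X)/\pi(a,\X,\bb^*)-1\}\B(a,\X)]$, which vanishes under C0 up to an $O(l^2)$ remainder; the condition $nl^4\to 0$ in C2 makes this $o(n^{-1/2})$. For the variance I would apply a Hoeffding-type projection of $M_n$ onto sub-tuples of observations: the leading degree-one projection contributes an i.i.d. average of variance $O(n^{-1})$, while degenerate higher-order layers carry compensating factors of $(nl^2)^{-1}$ arising from kernel-variance inflation that are controlled by C2's condition $nl^2\to\infty$. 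Combining bias and variance gives $n^{-4}M_n = O_p(n^{-1/2})$, and hence $\wh\bb-\bb^* = O_p(n^{-1/2})$. The technically delicate point is verifying that each degenerate Hoeffding layer truly remains negligible despite the kernel-variance inflation; this is where both bandwidth bounds in C2 must be used simultaneously, together with the boundedness provided by C3--C4.
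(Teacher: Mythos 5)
Your proposal follows essentially the same route as the paper's proof: linearize the first-order condition of the kernel-weighted objective at $\bb^*$, show the Gauss--Newton-type Hessian converges to a nonsingular limit (the paper's $E[\{\U(A_j,\bb^*)\}^{\otimes 2}w(A_j)]$), and bound the score term by a Hoeffding/H\'ajek-type projection of the kernel-weighted double sum, whose mean is $O(l^2)$ by C0 (rendered negligible by $nl^4\to 0$) and whose variance is $O(n^{-1})$. The only differences are cosmetic (an explicit ULLN consistency step and slightly different normalization of the limiting Hessian), so the argument matches the paper's in structure and in where the bandwidth conditions of C2 enter.
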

Condition C0 is not really necessary for Lemma
  \ref{lem:betaconv}. We can redefine $\bb^*$ as the unique minimum of
  (\ref{eq:obj}) and Lemma
  \ref{lem:betaconv} still holds. 
Because the
nonparametric estimation convergence rate is slower than
$O_p(n^{-1/2})$, Lemma \ref{lem:betaconv} indicates that 
we can fix $\bb$ at
$\bb^*$ in the following analysis as long as we let $nl^4\to 0$, and the first order bias and variance property of
$\wh\theta(a)$ will not be affected. 

\begin{Th} \label{th:a}
Under regularity conditions C0 to C6, and if
(\ref{eq:contpispec}) holds, then the estimator $\wh\theta(a)$ defined
by (\ref{eq:thetacon}) has asymptotic normal distribution with
asymptotic bias and variance: 
\be
E\{\wh\theta(a)\}-\theta(a)
&=&\frac{h^2}{2}E\left[
\frac{\partial^2 \{\pi_0(a,\X_i) m(a,\X_i)\}}{\pi_0(a,\X_i)\partial
  a^2}\right]
\int
t^2K(t)dt+O(h^4+n^{-1/2}), \label{eq:bias1} \\
\var\{\wh\theta(a)\}
&=&\frac{\int K^2(t)dt}{nh}E\left\{
\frac{m^2(a,\X_i)+\sigma^2(a,\X_i)}{\pi_0(a,\X_i)}
\right\}+O(n^{-1}h^{-1/2}),\label{eq:var1}
\ee
where $\sigma^2(A_i, \X_i)=\var(Y_i \mid A_i,\X_i)$.
\end{Th}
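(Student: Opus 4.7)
The plan is to isolate the contribution of estimating $\bb$ from the purely nonparametric smoothing variation, and then reduce the analysis to a standard kernel expansion. By Lemma \ref{lem:betaconv} and the correctness of (\ref{eq:contpispec}), $\wh\bb - \bb_0 = O_p(n^{-1/2})$, so a first-order Taylor expansion of $1/\pi(a,\X_i,\bb)$ about $\bb_0$ gives
\begin{eqnarray*}
\wh\theta(a) \;=\; \wh\theta^*(a) \;-\; \wh R_n(a)\trans(\wh\bb-\bb_0) \;+\; O_p(n^{-1}),
\end{eqnarray*}
where $\wh\theta^*(a) = n^{-1}\sumi K_h(A_i-a)Y_i/\pi_0(a,\X_i)$ and $\wh R_n(a) = n^{-1}\sumi K_h(A_i-a)Y_i\, \pi'_\bb(a,\X_i,\bb_0)/\pi_0(a,\X_i)^{2}$. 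A routine kernel law-of-large-numbers under C3--C6 shows $\wh R_n(a) = O_p(1)$, so the correction is $O_p(n^{-1/2})$. By C2 this is strictly smaller in stochastic order than the target standard deviation $O((nh)^{-1/2})$, and it is absorbed into the $O_p(n^{-1/2})$ remainder in (\ref{eq:bias1}). Hence the leading bias and variance of $\wh\theta(a)$ coincide with those of $\wh\theta^*(a)$.

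For the bias of $\wh\theta^*(a)$, ignorability gives $E(Y_i\mid A_i,\X_i) = m(A_i,\X_i)$, and conditioning on $\X_i$ and integrating $A_i$ against its density $\pi_0(\cdot,\X_i)$ yields
\begin{eqnarray*}
E\{\wh\theta^*(a)\} \;=\; E_{\X}\int K_h(u-a)\, \frac{m(u,\X)\pi_0(u,\X)}{\pi_0(a,\X)}\, du.
\end{eqnarray*}
The substitution $u = a+ht$ followed by a second-order Taylor expansion of $m(\cdot,\X)\pi_0(\cdot,\X)$ about $a$, combined with $\int K(t)\,dt = 1$, the symmetry $\int tK(t)\,dt = 0$, and the smoothness imposed by C4--C5, produces $\theta(a) = E\{m(a,\X)\}$ at zeroth order, cancellation at first order, and precisely the advertised $h^2$ term at second order. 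Higher-order Taylor remainders together with the $\wh\bb$-correction give the remainder $O(h^4+n^{-1/2})$ in (\ref{eq:bias1}).

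For the variance, independence across $i$ gives
\begin{eqnarray*}
\var\{\wh\theta^*(a)\} \;=\; n^{-1} E\!\left\{\frac{K_h^2(A-a)Y^2}{\pi_0(a,\X)^2}\right\} - n^{-1}\bigl[E\{\wh\theta^*(a)\}\bigr]^2.
\end{eqnarray*}
Using $E(Y^2\mid A,\X) = m^2(A,\X)+\sigma^2(A,\X)$ and the same substitution $u = a+ht$ turns the second moment into
\begin{eqnarray*}
\frac{1}{nh}\, E_{\X}\int K^2(t)\, \frac{[m^2(a+ht,\X)+\sigma^2(a+ht,\X)]\pi_0(a+ht,\X)}{\pi_0(a,\X)^2}\, dt,
\end{eqnarray*}
whose leading order reproduces (\ref{eq:var1}); the squared-mean term contributes $O(n^{-1})$, which fits inside the $O(n^{-1}h^{-1/2})$ remainder. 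Asymptotic normality then follows from a Lyapunov CLT for the i.i.d.\ summands $K_h(A_i-a)Y_i/\pi_0(a,\X_i)$: boundedness of $m$, $\sigma^2$, and $1/\pi_0$ bounds their third absolute moment by $O(h^{-2})$, so Lyapunov's ratio is $O((nh)^{-1/2})\to 0$ under C2.

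The hardest step will be the first one: showing rigorously that the $\wh\bb$-correction is negligible despite the $h^{-1}$ magnification inside each kernel weight. This is where Lemma \ref{lem:betaconv} does the crucial work, combined with uniform-in-neighborhood control of kernel-weighted averages like $\wh R_n(a)$; the boundedness in C3, C5, C6 and the joint smoothness of $\pi(a,\x,\bb)$ in C4 are tailored precisely to supply this. Once that decomposition is justified, what remains is classical Nadaraya--Watson-style kernel asymptotics.
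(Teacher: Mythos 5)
Your proposal is correct and follows essentially the same route as the paper: invoke Lemma \ref{lem:betaconv} to argue the $O_p(n^{-1/2})$ effect of estimating $\bb$ is negligible against the nonparametric $(nh)^{-1/2}$ scale (the paper simply "fixes $\bb$ at $\bb_0$" at this point, while you make the Taylor step explicit), then perform the standard Nadaraya--Watson-type kernel expansions for bias and variance and conclude normality via a CLT for the i.i.d.\ kernel-weighted summands.
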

\begin{Th}\label{th:biasvar1}
Under regularity conditions C0 to C6, and if (\ref{eq:contmspec}) holds, then the estimator $\wh\theta(a)$
defined by (\ref{eq:thetacon}) has asymptotic normal distribution with asymptotic bias and variance:
\be
E\{\wh\theta(a)\}-\theta(a)
&=&\frac{h^2}{2}E
\left[\frac{\partial^2 \{\pi_0(a,\X_i)m(a,\X_i)\}}{\pi(a,\X_i,\bb^*) \partial a^2
}\right]\int t^2K(t)dt 
+O(h^4+n^{-1/2}), \label{eq:bias2} \\
\var\{\wh\theta(a)\}
&=&\frac{\int K^2(t)dt}{nh}E\left[
\frac{\pi_0(a,\X_i)\{m^2(a,\X_i)+\sigma^2(a,\X_i)\}}{\pi^2(a,\X_i,\bb^*)}
\right]+O(n^{-1}h^{-1/2}).\label{eq:var2} 
\ee

\end{Th}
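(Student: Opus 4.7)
The overall strategy is to mirror the argument used for Theorem \ref{th:a}, with the crucial difference that, when (\ref{eq:contmspec}) holds but (\ref{eq:contpispec}) may fail, we can no longer invoke $\pi(a,\x,\bb^*)=\pi_0(a,\x)$; instead, the leading bias term must be cancelled using the population balancing identity (Condition C0) combined with outcome-model correctness. By Lemma \ref{lem:betaconv}, $\wh\bb-\bb^*=O_p(n^{-1/2})$. A first-order Taylor expansion of $1/\pi(a,\X_i,\bb)$ around $\bb^*$, together with the boundedness Conditions C3--C6, shows that replacing $\wh\bb$ by $\bb^*$ in (\ref{eq:thetacon}) alters $\wh\theta(a)$ by an $O_p(n^{-1/2})$ term, absorbed in the remainders of (\ref{eq:bias2}) and (\ref{eq:var2}) since $nh\to\infty$ by C2. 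Hence it suffices to analyze $\wt\theta(a)\equiv n^{-1}\sumi K_h(A_i-a) Y_i/\pi(a,\X_i,\bb^*)$.

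For the bias, I condition on $\X_i$, use $E(Y_i\mid A_i,\X_i)=m(A_i,\X_i)$, and recall that the conditional density of $A_i$ given $\X_i$ is $\pi_0(\cdot,\X_i)$, obtaining
\begin{align*}
E\{\wt\theta(a)\}
= E_{\X}\!\int K_h(u-a)\,\frac{m(u,\X)\pi_0(u,\X)}{\pi(a,\X,\bb^*)}\,du.
\end{align*}
Changing variables to $t=(u-a)/h$ and Taylor expanding $m(a+ht,\X)\pi_0(a+ht,\X)$ in its first argument to second order (using C5 together with the same smoothness of $\pi_0(a,\x)$ implicit in Theorem \ref{th:a}) yields
\begin{align*}
E\{\wt\theta(a)\}
= E_{\X}\!\left\{\frac{m(a,\X)\pi_0(a,\X)}{\pi(a,\X,\bb^*)}\right\}
+ \frac{h^2}{2}\,E_{\X}\!\left[\frac{\partial^2\{m(a,\X)\pi_0(a,\X)\}/\partial a^2}{\pi(a,\X,\bb^*)}\right]\!\int t^2 K(t)\,dt + O(h^4).
\end{align*}
The pivotal step is that the leading term equals $\theta(a)$: under (\ref{eq:contmspec}) we have $m(a,\x)=\ba\trans\B(a,\x)$, and premultiplying Condition C0 by $\ba\trans$ gives $E_{\X}[\{\pi_0(a,\X)/\pi(a,\X,\bb^*)-1\}m(a,\X)]=0$, i.e.\ $E_{\X}\{m(a,\X)\pi_0(a,\X)/\pi(a,\X,\bb^*)\}=E\{m(a,\X)\}=\theta(a)$. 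This, combined with the $\wh\bb$-to-$\bb^*$ reduction, yields (\ref{eq:bias2}).

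The variance calculation is analogous. From $\var\{\wt\theta(a)\}=n^{-1}\var\{K_h(A-a)Y/\pi(a,\X,\bb^*)\}$, the squared-mean piece is $O(n^{-1})$ and hence negligible next to the $(nh)^{-1}$ contribution arising from $n^{-1}E\{K_h^2(A-a)Y^2/\pi^2(a,\X,\bb^*)\}$. Evaluating the latter by the same conditioning, change of variables, and Taylor expansion, and using $E(Y^2\mid A,\X)=m^2(A,\X)+\sigma^2(A,\X)$, delivers (\ref{eq:var2}). Asymptotic normality then follows from the Lindeberg--Feller CLT applied to the triangular array $n^{-1/2}\{K_h(A_i-a)Y_i/\pi(a,\X_i,\bb^*)-\theta(a)\}$; the Lindeberg condition is verified from Conditions C1 and C3--C6 together with $nh\to\infty$, exactly as in Theorem \ref{th:a}.

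The main obstacle is the leading-order cancellation: without the pointwise identity $\pi(a,\X,\bb^*)=\pi_0(a,\X)$, one must exploit the population balancing identity on the \emph{vector} $\B(a,\X)$ and project it onto the coefficient vector $\ba$ coming from (\ref{eq:contmspec}). Justifying that C0 really holds at the probability limit $\bb^*$, rather than only at the empirical minimizer of (\ref{eq:betaconint}), requires the local-uniqueness clarification following the regularity conditions and a standard continuity argument tied to the objective (\ref{eq:obj}); once that is in place, the remainder of the argument reduces to the bias--variance bookkeeping already carried out for Theorem \ref{th:a}.
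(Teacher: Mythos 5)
Your proposal is correct and follows essentially the same route as the paper: reduce $\wh\bb$ to $\bb^*$ via Lemma \ref{lem:betaconv}, condition on $\X_i$ and Taylor expand the kernel-smoothed mean and second moment, and cancel the leading bias term by projecting the population balancing identity of Condition C0 onto $\ba$ under (\ref{eq:contmspec}). The only difference is presentational: you make explicit the $\ba\trans$-projection step (and the CLT for the triangular array) that the paper leaves implicit in its last equality and in Appendix A.2.3, which is a welcome clarification rather than a deviation.
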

Theorems \ref{th:a} and \ref{th:biasvar1} together
reflect a robust property of the 
proposed estimator, and give equivalent results when all nuisance
models are correctly specified. Specifically, Theorem \ref{th:a} 
  describes the robustness to misspecification of the
  outcome models, in that as long as the propensity score is correctly
  specified, the estimation of the treatment response function is valid even if we do not assume a correct model for the outcome.
This is because the propensity score balances any functions of the covariates.  
Theorem \ref{th:biasvar1} allows for the misspecification of the
propensity score, with the restriction that Condition C0 needs to
hold. If we choose to ensure C0 through allowing sufficiently many model
parameters, then $\bb$ will have length $p=qn$, which practically means that the
propensity score is non-parametrically estimated. For example, 
we can let $\pi(a_j,\x)=\bb_{(j)}\trans\B(a_j,\x)$, where
$\bb_{(j)}$ has dimension $q$. Then, solving
(\ref{eq:betacon1}) for all observed $a=a_j$ corresponds to minimizing
the loss function
$$
 \sum_{i=1}^n [{K_l(A_i-a_j)}\log \{\bb_{(j)}\trans\B(a_j,\X_i)\} - \bb_{(j)}\trans\B(a_j,\X_i)],$$ for $j=1, \dots, n$.

 Finally, note here, that the dose response function $\theta(a)$ is estimated nonparametrically, and this estimation has bias of order $h^2$, although
asymptotically vanishing, and there is the usual bias-variance
trade-off. Next,
we give a result useful for inference on a causal contrast
$\theta(a)-\theta(b)$.

\begin{Th}\label{th:biasvar2}
Under regularity conditions C0 to C6, and if either
(\ref{eq:contpispec}) or (\ref{eq:contmspec}) hold, then
$\wh\theta(a)-\theta(a)$ defined by (\ref{eq:thetacon}) is
asymptotically a Gaussian process, and has asymptotic
variance-covariance:
\be\label{eq:cov}
&&\cov\{\wh\theta(a),\wh\theta(b)\}\\
&=& 
(nh)^{-1}E\int_0^1
\frac{K(t)K(t+c)\{m^2(a,\X_i)+\sigma^2(a,\X_i)\}}
{\pi(a,\X_i,\bb^*)\pi(b,\X_i,\bb^*)}\pi_0(a,\X_i)dt\n\\ 
&&+n^{-1}E\int_0^1K(t)K(t+c)
\left\{2m(a,\X_i)m'_a(a,\X_i) \pi_0(a,\X_i)+m^2(a,\X_i) \pi_{0a}'(a,\X_i)\right.\n\\ 
&&\left.+2\sigma(a,\X_i)\sigma'_a(a,\X_i) \pi_0(a,\X_i)+\sigma^2(a,\X_i) \pi_{0a}'(a,\X_i)\right\}t/\{\pi(a,\X_i,\bb^*)\pi(b,\X_i,\bb^*)\} 
dt\n\\
&&-n^{-1}\theta(a)\theta(b)
+O(n^{-1}h+h^{-1}n^{-3/2}),\n
\ee
where $c\equiv(a-b)/h$.
\end{Th}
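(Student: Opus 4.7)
The plan is to exploit the i.i.d.\ structure of the summands in $\wh\theta(a)$ to reduce the covariance computation to a single-observation expectation, then to Taylor-expand the kernel convolution to peel off both the leading $(nh)^{-1}$ and the subleading $n^{-1}$ contributions.

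First I would argue that $\wh\bb$ can be replaced by its probability limit $\bb^*$ at the order needed here. By Lemma \ref{lem:betaconv}, $\wh\bb - \bb^* = O_p(n^{-1/2})$, which is faster than the nonparametric rate $(nh)^{-1/2}$ governing $\wh\theta(a)$. A mean-value expansion of $1/\pi(a,\X_i,\wh\bb)$ combined with Condition C4 shows that the resulting perturbation in the covariance is absorbed into the stated $O(n^{-1}h + h^{-1}n^{-3/2})$ remainder. Writing $\wt\theta(a) \equiv n^{-1}\sumi Z_i(a)$ with $Z_i(a) \equiv K_h(A_i-a)Y_i/\pi(a,\X_i,\bb^*)$, the $Z_i(a)$ are i.i.d.\ in $i$, so $\cov\{\wt\theta(a),\wt\theta(b)\} = n^{-1}\bigl[E\{Z_1(a)Z_1(b)\} - E\{Z_1(a)\}E\{Z_1(b)\}\bigr]$.

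Next I would compute $E\{Z_1(a)Z_1(b)\}$ by conditioning on $\X_1$, integrating against the conditional density $\pi_0(\cdot,\X_1)$, and using $E(Y_1^2 \mid A_1,\X_1) = m^2(A_1,\X_1) + \sigma^2(A_1,\X_1)$. The substitution $A_1 = a + ht$ introduces a Jacobian factor $h^{-1}$ and transforms the product of kernels into $K(t)K(t+c)$ with $c=(a-b)/h$. A second-order Taylor expansion of $[m^2(a+ht,\X_1)+\sigma^2(a+ht,\X_1)]\pi_0(a+ht,\X_1)$ in $ht$ then yields three pieces: the constant contribution gives the leading $(nh)^{-1}$ term in (\ref{eq:cov}); the first-derivative contribution, whose $ht$ factor cancels one power of $h$ against the Jacobian, produces exactly the $n^{-1}$ correction involving $2mm'_a\pi_0 + m^2\pi_{0a}' + 2\sigma\sigma'_a\pi_0 + \sigma^2\pi_{0a}'$ weighted by $t$; and the second-order remainder yields $O(n^{-1}h)$. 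Finally, $E\{Z_1(a)\}E\{Z_1(b)\} = \theta(a)\theta(b) + O(h^2)$ by the bias expansions in Theorems \ref{th:a} and \ref{th:biasvar1}, producing the $-n^{-1}\theta(a)\theta(b)$ term up to the stated remainder.

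For the Gaussian process conclusion, finite-dimensional joint normality at any collection $\{a_1,\ldots,a_r\}$ follows from a multivariate Lindeberg CLT applied to the i.i.d.\ vector summands $(Z_i(a_1),\ldots,Z_i(a_r))$ normalized by $\sqrt{nh}$; tightness follows from a maximal-inequality argument on the kernel-weighted empirical process, using smoothness of $K$ from C1 and of $m, \pi_0, \pi$ from C4--C5. The main obstacle is propagating the $\wh\bb$ perturbation to the subleading $n^{-1}$ order without contaminating it: the $(nh)^{-1}$ leading term is insensitive since $n^{-1/2}\ll (nh)^{-1/2}$, but the $n^{-1}$ piece is of the same order as cross-terms induced by $\wh\bb-\bb^*$, so I need to verify that those cross-terms are centered at leading order and the residuals fit inside $O(h^{-1}n^{-3/2})$. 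A secondary nuisance is uniformity in $c$: when $|a-b|$ is fixed and $h\to 0$, $c$ diverges and the two kernels have disjoint support, so the formula vanishes automatically through $K(t)K(t+c)$, but the remainder bounds derived from the Taylor step must be made uniform over the relevant range of $c$.
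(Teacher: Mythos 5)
Your proposal is correct and follows essentially the same route as the paper: exploit the i.i.d.\ structure to reduce $\cov\{\wh\theta(a),\wh\theta(b)\}$ to $n^{-1}[E\{Z_1(a)Z_1(b)\}-E\{Z_1(a)\}E\{Z_1(b)\}]$, substitute $A_1=a+ht$ and Taylor-expand $\{m^2+\sigma^2\}\pi_0$ to extract the $(nh)^{-1}$ leading term, the $n^{-1}$ derivative correction, and the $-n^{-1}\theta(a)\theta(b)$ piece, with the $\wh\bb$-to-$\bb^*$ replacement (via Lemma \ref{lem:betaconv}) absorbed into $O(n^{-1}h+h^{-1}n^{-3/2})$, and Cram\'er--Wold on linear combinations for the Gaussian limit. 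Your added tightness discussion goes slightly beyond the paper, which establishes only finite-dimensional normality, but the core argument is the same.
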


Note that when $c\notin(-2,1)$, $K(t)K(t+c)=0$ for all $t$. Therefore, 
the covariance has order $O(n^{-1})$ if
$c\notin(-2,1)$ and $O\{(nh)^{-1}\}$ otherwise. Thus, comparing the term of
order $O\{(nh)^{-1}\}$ in the covariance 
 in Theorem \ref{th:biasvar2}
with the terms of the same order for
the variances in Theorems \ref{th:a} and \ref{th:biasvar1}, we
see that when $a$ and $b$ are close to each other relative to $h$,
the variance 
of the contrast $\wh\theta(a)-\wh\theta(b)$ is close to zero. On the
contrary, when $a$ and $b$ are far apart, then the variance of the
contrast is dominated by the variance of $\wh\theta(a)$ and
$\wh\theta(b)$. 

Theorems \ref{th:a}, \ref{th:biasvar1} and \ref{th:biasvar2}
provide theoretical properties of the leading orders of the bias,  variance
and covariance properties of the nonparametric estimators. In large
samples, these results can be used to perform
inference. Practically, unlike for parameter estimation, because the next
order of the nonparametric analysis is only slightly smaller than
the leading order,  inference based on these results is often not sufficiently
precise. This phenomenon has been observed in many nonparametric or
even semiparametric problems including quantile regression, survival
analysis, etc., and bootstrap is often used instead.

\section{Simulation Experiments}\label{sec:simu}

\subsection{Categorical treatment}

To investigate the finite sample performance of our method for the 
categorical treatment case, we performed a first simulation study. 
We generate a five dimensional covariate vector $\X$, where 
$X_1=1$, and $X_2$ to $X_5$ are generated independently from 
a normal distribution with mean 3 and variance 4. 
We set $K=3$ and 
the propensity score 
$\pi_0(k,x)=\exp(\x\trans\bb_k)/\{1+\sum_{k=0}^2 
\exp(\x\trans\bb_k)\}$ for $k=0,1,2$, and let 
$\pi_0(3,x)=1-\sum_{k=0}^2\pi_0(k,x)$. 
Here, 
$\bb_0=(0,-0.2475,-0.275,0.1875,0.075)\trans$, 
$\bb_1=(0,-0.165,-0.15,0.125,0.05)\trans$, and 
$\bb_2=\0$. 
We set $m(k,\x)=\ba_k\trans\x$, where 
$\ba_0=(200,0,13.7,13.7,13.7)\trans$, and $\ba_1$ to $\ba_3$ are set to be 
$(200,27.4,13.7,13.7,13.7)\trans$. 
We generated $Y_i^k$'s by adding 
a standard normal random noise to the true mean $m(k,\x_i)$. 

In 
implementing the estimators, in addition to the ideal case where both 
the $\pi(\cdot)$ model and the basis for the $m(\cdot)$ model are 
correct, we also experiment with incorrectly specified models. In 
misspecifying the $\pi(\cdot)$ models, we replace $X_1$ with $e^{X_1}$, 
$X_2$ with $X_1X_2$, $X_3$ with $X_1^2X_3$, $X_4$ with $X_1+X_4$ and 
$X_5$ with $X_5\sin(X_5)^2$. 
In misspecifying the $m(\cdot)$ models, we 
replace $X_1$ with $X_1^2$, $X_2$ with $X_1X_2$, $X_3$ with $X_2X_3^2$
and $X_4$ with $(X_4-3)^3+3$. 
We investigate four 
different scenarios, when both models are correct, when the $\pi(\cdot)$
model is misspecified, when the $m(\cdot)$ model is misspecified and when
both models are misspecified. 
Note that our design is such that correctly specifying the basis for 
$m(\cdot)$ corresponds to balancing the first moments of the 
covariates. 
For comparison, we also implemented the inverse probability 
weighting estimators (IPW) using maximum likelihood for the estimation
of the propensity score, and its double robust augmented version using
both the correct propensity score and outcome models; for the latter
we use the R-package \texttt{PSweight} (\citeauthor{PSweight}
\citeyear{PSweight}). 
The results over 1000 replicates are 
displayed in Tables \ref{tab:cate1}-\ref{tab:cate3} (see Appendix \ref{app:tab})
for different sample sizes, where for each causal contrast 
$\theta_k-\theta_0$, $k=1,2,3$, we provide bias, 
standard deviation, mean squared errors (MSE) as well as 
average estimated standard deviation, and 
empirical coverage of the resulting 95\% confidence interval. See 
Remark \ref{rem:varcat} for how the inference is carried out. 

\begin{figure}[h!]
\centering 
\includegraphics[width=0.8\linewidth]{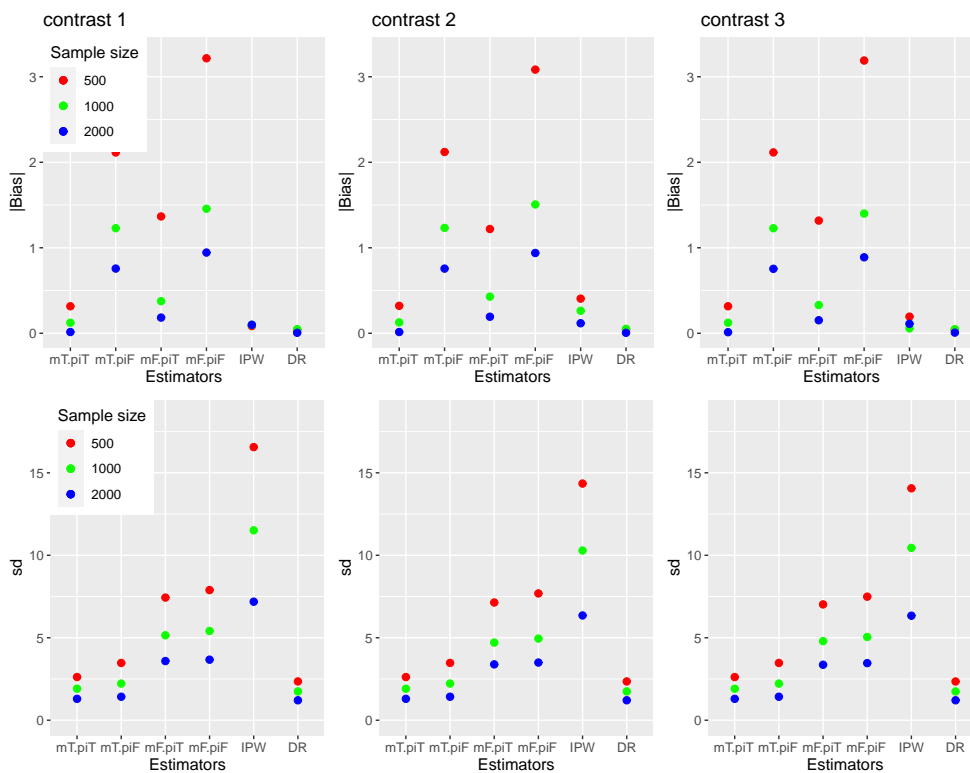}
\caption{Absolute bias and sd for the three contrasts
  $\theta_j-\theta_0$, $j=1,2,3$, over 1000 replicates for the six
  estimators: $m,\pi$ correct (mT.piT), $m$ correct (mT.piF), $\pi$
  correct (mF.piT), $m,\pi$ misspecified (mF.piF), IPW and augmented IPW (DR), and three sample sizes.
}
\label{fig:msecat}
\end{figure}

Biases and standard deviations are also displayed graphically in 
Figure \ref{fig:msecat}. These numerical experiments confirm
the theoretical 
robustness properties in 
the sense that much smaller biases are observed when at least one of 
the models is correctly specified compared to when both models 
$\pi(\cdot)$  
and $m(\cdot)$ are misspecified. Increasing sample sizes improves
biases and variances as expected, except when all models are
misspecified. Moreover, compared to the maximum likelihood based  
inverse probability weighting method (ML-IPW), our estimator yields
lower variance, and its MSE is smaller even when both 
models are misspecified. The classical augmented IPW (DR) should be
considered as a benchmark, since in contrast with our estimator which
only fits the propensity score, DR fits all models. Fitting the
outcome models is, however, arguably not desirable \citep{Rubin:2007},
and it appears to yield lower finite sample bias and variance in the
cases considered. The relative efficiency of our estimator compared to
DR improves with increasing sample sizes although slowly.   
Empirical 
coverages match the nominal level of 95\%, and this gets better with
increasing sample size, except for when all models are misspecified as
expected from theory.

\subsection{Continuous treatments}

To assess the  performance of the proposed methods under continuous
treatment, we experiment with both
linear and nonlinear outcome models. In the nonlinear design, we 
 generate a five dimensional covariate vector $\X$, where $X_{1}=1$
and  $(X_2,X_3,X_4,X_5)\trans$ follows a multivariate standard normal
distribution. Thus, these covariates have mean
zero, variance 1 and are independent of each other.
The true propensity score function is
\bse
\pi_0(a,\x)=\frac{\Gamma(15)}{\Gamma[15\lambda(\x)]\Gamma[15\{1-\lambda(\x)\}]}\Big(\frac{a}{20}\Big)^{15\lambda(\x)-1}\Big(1-\frac{a}{20}\Big)^{15\{1-\lambda(\x)\}-1}\frac{1}{20}.
\ese
Note that this is the probability density function of $A$ when $A/20$
follows a beta distribution with parameters
$15\lambda(\x)$ and $15\{1-\lambda(\x)\}$, 
where 
$\logit\{\lambda(\x)\}=(-0.8,0.1,0.1,-0.1,0.2)\x$. We further
generate the response $Y$  from a Bernoulli distribution with
probability $m_1(A,\X)\equiv\text{expit}\{\mu(A,\X)\}$,
where $\mu(a,\x)= (1,0.2,0.2,0.3,-0.1)\x + a(0.1,-0.1,0,0.1,0)\x -0.13^3a^3$.
This simulation design is identical to that of \cite{kennedyetal:2017}.
In the linear design, the response is generated from a
normal distribution with mean
$m_2(A,\X)$ and variance 0.16,  where $m_2(a,\x)=\{\mu(a,\x)+15\}/20$.

Two different types of IPW estimators are implemented in both linear and
nonlinear outcome cases, respectively a maximum likelihood based
inverse probability weighting 
estimator and the proposed robust balancing estimator. For the former,
we used a maximum likelihood approach
to estimate the parameter of the 
propensity score. For the balancing
estimator, (\ref{eq:betaconint}) is minimized 
where the bandwidth $l$ was set to
$3n^{-1/3}$. In the nonparametric estimation of $\theta(a)$
in (\ref{eq:thetacon}),
both the local constant and local linear estimators given in Remark
\ref{rem:con} are implemented
and $h$ was selected by
the leave-one-out cross-validation and the one-sided cross-validation
\citep{oscv}.  
For comparison, the inverse probability weighted and the doubly robust
estimator given in \cite{kennedyetal:2017} are also implemented using
the R-package \texttt{npcausal} (github.com/ehkennedy/npcausal).

For the linear outcome case, 
 the estimators are assessed in four different scenarios where both
models are correct or either of the models is misspecified. We use the
basis of $\mu(a,\x)$ as basis of the outcome model.
In misspecifying either the $\pi(\cdot)$ or $m(\cdot)$ model, we
replaced the covariates with $\x^*$ as in
\cite{kang2007demystifying}, with
\bse
    \x^* = \left\{1, e^{x_2/2}, \frac{x_3}{1+\exp(x_2)}+10, (x_2x_4/25+0.6)^3, (x_3+x_5+20)^2 \right\}\trans.
\ese
In addition, the misspecified $m_i(\cdot) \, (i=1,2)$ has no cubic
term of $a$ in its bases. We in fact used the same construction for
the nonlinear outcome model. However, we point out that this leads to
the scenario that the outcome model basis is never correctly
specified, while the propensity score model is either correct or incorrect.

\begin{figure}[h!]
    \centering 
    \includegraphics[width=0.8\textwidth]{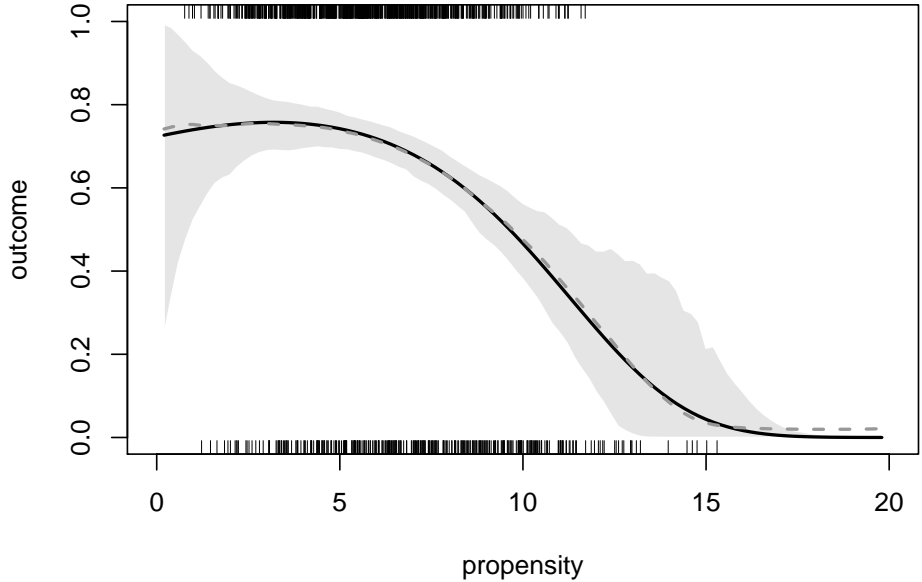}
    \caption{Simulation in the continuous nonlinear outcome case. Rug: One
      simulated data set with $n$=1000; Solid: True 
      outcome; Dotted: Mean of the estimates, i.e., $\frac{1}{T}\sum_{t=1}^T\hat\theta_t(a)$, using local constant estimation and CV, and $T=1000$; Filled curves: 5\% and 95\% quantiles of $\hat\theta_t(a)$.}
    \label{fig:cont}
\end{figure}

We generated the simulated data with sample sizes $n=500, 1000, 2000$
and the result is based on $1000$ replicates. Figure \ref{fig:cont}
illustrates the simulated data with the nonlinear outcome model and the
empirical coverage of the proposed estimator under $n=1000$. We
assessed the performance of each estimator by calculating the integrated
absolute bias and the integrated root-mean-squared error (RMSE), where
\bse
\text{bias} &=& \int_{\mathcal{A}^*}\left|E\{\hat\theta(a)\}-\theta(a)\right|f_A(a)da, \\
\text{RMSE} &=& \int_{\mathcal{A}^*}E\left[\{\wh\theta(a)-\theta(a)\}^2\right]^{1/2}f_A(a)da,
\ese
where $\mathcal{A}^*$ is a trimmed support of $A$ which excludes 10\%
mass on the boundaries.

The results are given in Tables \ref{tab:nonlinear} and
\ref{tab:linear} (Appendix \ref{app:tab}). The integrated absolute
bias and the integrated RMSE 
are numerically calculated and presented with the integrated RMSE in
parentheses. For ease of presentation, both measures are multiplied
by 100. These results confirm that  the proposed estimator is
robust. In addition, as seen in Table
\ref{tab:nonlinear}, we find that our estimator shows robust
performance even under the nonlinear outcome design where
(\ref{eq:contmspec}) does not hold, which means that none of the four
cases used the true basis of the outcome model. Among the balancing
estimators, the variant using local linear fit and one-sided CV seems
to perform best in terms of bias and RMSE when both all nuisance
models are correctly specified. The 
balancing method has also both lower bias and RMSE than the IPW
estimators. We note that the bias is most sensitive to specification
of 
the propensity score model. In all cases, the proposed estimator
outperforms the estimator by \cite{kennedyetal:2017} in terms of bias,
although RMSE Kennedy's double robust estimator has lowest RMSE. Here,
as for the categorical case, this estimator can be considered a
benchmark since it fits also outcome models in contrast with the
introduced balancing estimators.  

\section{Effect of BMI on self reported health decline}\label{sec:app}

As a case study, we investigate the effect of Body Mass Index (BMI) on
self reported health (SRH) decline. This analysis is based on data
from the Survey of Health, Aging and Retirement in Europe
(SHARE). This is an interview based longitudinal survey of individuals
of age 50 years or older \citep{BS:13}. Here we use  data on women
from three countries (Sweden, Netherland, Italy) that participate in
waves 1 and 5 of the SHARE study. Wave 1 data collected in 2004 serve as the
baseline, and individuals are followed up at wave 5, collected in
2013. We are interested in estimating the average causal effect of BMI
(a continuous valued treatment with range 15.62-49.60 in the data) on
SRH  decline between baseline and follow-up. SRH is measured by
asking the question ``Would you say your health is: excellent, very
good, good, fair or poor?'' Despite its unspecific
  nature, SRH has been found to predict mortality well in many studies
  \citep{idleretal:97}, and is thus considered as an important health
  indicator.
SRH decline is here defined as a binary
variable which, for the respondents reporting ``excellent, very good,
or good health'' at baseline, will take value one if they changed their
answer to ``fair or poor health'' at follow-up, and 0 otherwise. The
resulting sample of complete cases consists of 1530 participants. In \cite{genbacketal:08}, predictors of SRH decline were investigated using logistic
regression, and it was found that BMI measured at baseline was a
significant (5\% level) predictor of SHR decline. Here we aim at
sharpening this analysis and study whether there is evidence that BMI
is a causal agent of SRH decline by using the introduced covariate
balancing procedure for causal inference. The covariates observed at
baseline that we use for balancing are age (years), whether the
participant responded to the SRH question at the beginning of the
interview (or the end), socio-economic variables (education level,
make ends meet easily), cognitive function variables (numeracy test,
date orientation question), health variables (number of chronic
diseases, number of mobility problems, depression measure, maximum
grip strength, limitation in normal activities), and lifestyle
variables (smoking habits, alcohol usage, physical activities). 
We refer to \cite{genbacketal:08} for a detailed description of these
covariates.
Encouraged by \cite{overweight:97} 
and \cite{ng2016novel},
our analysis is based on the following model for $A=
(\text{BMI}-15)/40 $ given the covariate vector $\x$:
\bse
\pi_0(a,\x) &=& \frac{\Gamma(\phi)}{\Gamma[\phi\lambda(\x)]\Gamma[\phi\{1-\lambda(\x)\}]}a^{\phi\lambda(\x)-1}(1-a)^{\phi\{1-\lambda(\x)\}-1} , \\
\logit\{\lambda(\x)\} &=& \bg\trans\x, \\
\bb &=& (\bg, \phi) .
\ese
The basis functions for the outcome model are chosen to be
$
\B(a,\x) = (\x,a,a^2,a^3).
$
A value for
$\bb^{(0)}=(\bg^{(0)},\phi^{(0)})$ is obtained by the maximum
likelihood estimation and used as the starting value for solving the
balancing equations (\ref{eq:betaconint}), with the bandwidth
$l=6n^{-1/3}$.
For nonparametric estimation of $\theta(a)$
in (\ref{eq:thetacon}), the local constant estimator given in Remark
\ref{rem:con} is used for simplicity, 
where $h$ was selected by one-sided cross-validation \citep{oscv}.
\begin{figure}[h!]
    \centering 
    \includegraphics[width=0.85\textwidth]{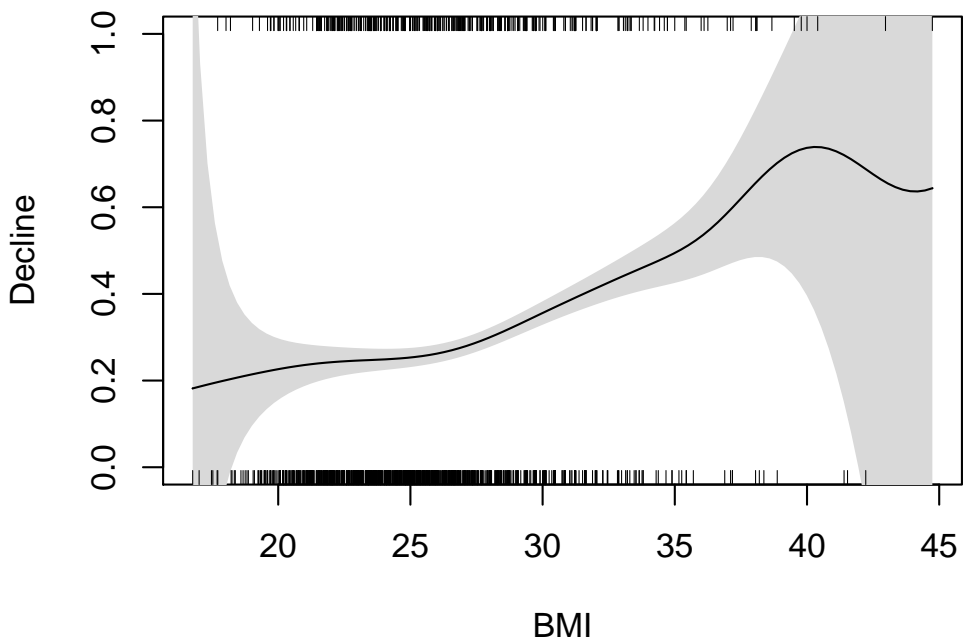}
    \caption{Effect of BMI on SRH decline.
Rug plot: the observations;
solid line: the estimated average treatment effect curve;
filled gray curve: the estimated pointwise confidence band.} 
\label{fig:real}
\end{figure}

Figure \ref{fig:real} displays the estimated effect curve of BMI on
SRH decline. Confidence bands are obtained using the variance
estimates described in Appendix \ref{app:varest}. Overall, we observe a
nonlinear effect curve. Specifically, we observe that
BMI has no significant effect for
values of BMI considered as normal (i.e. below 25) in that the
confidence band of the 
probability of decline contains the flat line. However in the
range of BMIs considered as
overweight (BMI larger than 25), an increase in the probability of
SRH decline  is observed, reflecting the causal effect of 
the increase of BMI on the probability of SRH decline. The
causal interpretation of this effect relies on the assumptions made. 
Mainly that all confounders have been observed, and
that a well defined intervention on BMI corresponds to the effect
measured \citep{Hernan2008}. Nevertheless, the results are in 
line with earlier studies pointing at a wide range of health risks 
from overweight and obesity (Afshin et al. \citeyear{overweight:97}).

\section{Discussion}\label{sec:disc}
We have introduced novel robust estimation and inference tools
for multi-level treatments. For continuous treatments our
proposal together with that of \cite{kennedyetal:2017} are, to the
best of our knowledge, the only  robust methods which model the
causal dose-response curve nonparametrically. Our
  results expand the recent important developments given by
  \cite{Fanetal:2018}.  
 For both the categorical and continuous treatment cases, we achieve robustness by balancing
basis functions for the outcome models when fitting a generalized
propensity score model which is either correct or sufficiently rich.
While the estimator proposed is locally efficient for
  the categorical case, asymptotic efficiency is not relevant for the
  continuous case where the parameter of interest is a function of the
  dose and is estimated non-parametrically.  

The proposal differs from earlier double robust
estimation in that it does not need outcome models to be fitted.  
This is an advantage when outcome is not observed
  at the design stage of the study. Indeed, it is argued that
  observational studies should be designed without using observed
  outcomes even if available in order to mimic the ``objectivity'' of the
  designs of randomized trials; see \citep{Rubin:2007} for a detail discussion. Our
  simulation results indicate that this is done at a cost in finite
  sample performance.
Our work is somewhat in contrast to the widespread
practice of using simple (e.g. linear or logistic linear) models for
the propensity score with matching estimators assuming that balance in
the joint distribution of the covariates is achieved 
\citep[e.g.,][]{IW:10,RubinThomas:00}. 
However, balancing the joint
distribution is not necessary, and in exchange, more elaborate requirements are
on the propensity score.
 From the results presented herein, it becomes 
transparent which functions of the covariates are sufficient to balance for in
order to both obtain consistency and, in the categorical treatment
case, local efficiency. 
    
In high-dimensional settings ($d\approx n$), it has recently been
shown that bias due to regularization in estimating correctly
specified linear outcome models can be corrected by using relevant
weights which are not necessarily based on the true propensity score
\cite[][]{atheyetal:18}; see also, e.g., \cite{Farrell:15} and
\cite{Dukesetal:20} for double robust estimation with many
covariates. An interesting future direction of research is whether one
can generalize the results presented herein to high-dimensional
situations, balancing many basis functions for the outcome models by
using, e.g., regularized GMM techniques \citep{alex2018highdimensional}.

\section*{Acknowledgements}
The Marianne and Marcus Wallenberg Foundation and the Swedish Research
Council are acknowledged for their financial support. 

\baselineskip=14pt
\bibliographystyle{agsm}
\bibliographystyle{chicago} 
\bibliography{catcon39}
	

\newpage
\baselineskip=28pt

\section*{Appendix}

\renewcommand{\thesection}{A.\arabic{section}}
\renewcommand{\thesubsection}{A.\arabic{subsection}}

\subsection{Categorical treatment: derivations}\label{app:cat}
\subsubsection{Asymptotic distribution and variance of $\wh\theta_k$'s}
Let 
\bse
\f_{ki}(\bb)\equiv\left\{\frac{I(A_i=k)}{\pi(k,\X_i,\bb)}-1\right\}\B(k,\X_i),
\ese
 $\f_i(\bb)\equiv\{\f_{1i}(\bb)\trans,
 \dots,\f_{Ki}(\bb)\trans\}\trans$,
 $\V(\bb)\equiv E\{ \f_i(\bb)\f_i(\bb)\trans\}$,
 $\wh\V(\bb)\equiv n^{-1}\sumi \f_i(\bb)\f_i(\bb)\trans$,
$\A(\bb)\equiv E\left\{ \partial\f_i(\bb)/\partial\bb\trans\right\}$
and
$\wh\A(\bb)\equiv n^{-1}\sumi\partial\f_i(\bb)/\partial\bb\trans$.

\begin{Lem}\label{lem:gmm}
Under regularity conditions A0, A1, A2 and A3, the GMM
 estimator $\wh\bb$ obtained by minimizing
$\{\sumi\f_i(\bb)\}\wh\V(\bb)^{-1}\{\sumi\f_i(\bb)\}$, is such that
\bse
n^{1/2}(\wh\bb-\bb^*)=-\{\A(\bb^*)\trans\V(\bb^*)^{-1}\A(\bb^*)\}^{-1}
\A(\bb^*)\trans\V(\bb^*)^{-1}\{n^{-1/2}\sumi\f_i(\bb^*)\}+O_p(n^{-1/2}). 
\ese
When (\ref{eq:pispec}) holds $\bb^*=\bb_0$.
\end{Lem}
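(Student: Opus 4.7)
The result is the standard linearization of a GMM estimator, so my plan follows the classical two-step template of \citet{newey:mcfadden:94}: first establish $\sqrt{n}$-consistency of $\wh\bb$, then Taylor-expand the first-order condition around $\bb^*$ and invert.

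First I would establish that $\wh\bb \povr \bb^*$. The sample criterion $Q_n(\bb)=\{n^{-1}\sumi\f_i(\bb)\}\trans \wh\V(\bb)^{-1}\{n^{-1}\sumi\f_i(\bb)\}$ converges in probability (uniformly in $\bb$ on a compact neighborhood of $\bb^*$) to $Q_0(\bb)=E\{\f_i(\bb)\}\trans\V(\bb)^{-1}E\{\f_i(\bb)\}$, using A1 (positive eigenvalues of $\V(\bb^*)$, which carry over to a neighborhood by continuity) together with standard uniform law arguments based on differentiability A2 and boundedness of $\A$ in A3. By A0, $Q_0(\bb)$ has a unique zero at $\bb^*$, and is strictly positive elsewhere in a neighborhood (since $\V(\bb^*)^{-1}$ is positive definite), which gives consistency via the argmin continuous mapping argument.

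Next, with $\wh\bb$ in an $o_p(1)$ neighborhood of $\bb^*$, write the GMM first-order condition
\begin{equation*}
\wh\A(\wh\bb)\trans \wh\V(\wh\bb)^{-1}\left\{n^{-1}\sumi \f_i(\wh\bb)\right\}=\0.
\end{equation*}
A mean-value expansion of $\f_i$ in $\bb$ (justified by A2) gives
\begin{equation*}
n^{-1}\sumi \f_i(\wh\bb)=n^{-1}\sumi \f_i(\bb^*)+\wh\A(\wt\bb)(\wh\bb-\bb^*),
\end{equation*}
for some $\wt\bb$ on the segment between $\wh\bb$ and $\bb^*$. A uniform law gives $\wh\A(\wt\bb)\povr \A(\bb^*)$ and $\wh\V(\wh\bb)\povr \V(\bb^*)$; by A3 the matrix $\A(\bb^*)\trans\V(\bb^*)^{-1}\A(\bb^*)$ is invertible. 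Substituting and solving for $\wh\bb-\bb^*$ yields
\begin{equation*}
n^{1/2}(\wh\bb-\bb^*)=-\{\A(\bb^*)\trans\V(\bb^*)^{-1}\A(\bb^*)\}^{-1}\A(\bb^*)\trans\V(\bb^*)^{-1}\left\{n^{-1/2}\sumi\f_i(\bb^*)\right\}+R_n,
\end{equation*}
where $R_n$ collects cross terms of the form $\{\wh\A(\wh\bb)-\A(\bb^*)\}\cdot n^{1/2}(\wh\bb-\bb^*)$ and $\{\wh\V(\wh\bb)^{-1}-\V(\bb^*)^{-1}\}\cdot n^{-1/2}\sumi\f_i(\bb^*)$. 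A CLT applies to $n^{-1/2}\sumi\f_i(\bb^*)$, which has mean zero by A0 and finite variance by A1, so it is $O_p(1)$; together with the $o_p(1)$ rates of the nuisance matrix differences and an a priori $O_p(1)$ bound on $n^{1/2}(\wh\bb-\bb^*)$ obtained by a preliminary bracketing of the same identity, we conclude $R_n=O_p(n^{-1/2})$, giving the stated expansion.

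Finally, when (\ref{eq:pispec}) holds, direct calculation gives $E\{\f_{ki}(\bb_0)\}=E[\{I(A_i=k)/\pi_0(k,\X_i)-1\}\B(k,\X_i)]=0$ for each $k$ by iterated expectation over $A_i\mid\X_i$, so $\bb_0$ solves $E\{\f_i(\bb)\}=\0$; uniqueness in A0 forces $\bb^*=\bb_0$. The most delicate step is controlling $R_n$ rigorously, specifically showing that the preliminary consistency yields a $\sqrt n$-rate so that the neglected quadratic/cross terms are indeed $O_p(n^{-1/2})$ rather than merely $o_p(1)$; this is handled by the standard bootstrapping argument that rearranges the expansion once to read off the rate of $\wh\bb-\bb^*$ and then substitutes it back into the remainder bound.
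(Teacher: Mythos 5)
Your overall route is the same as the paper's: linearize the first-order condition around $\bb^*$, use $\wh\A\to\A(\bb^*)$, $\wh\V\to\V(\bb^*)$ and invertibility from A3, and read off the expansion; your consistency step and the identification of $\bb^*=\bb_0$ under (\ref{eq:pispec}) via iterated expectations match what the paper assumes or states (the paper in fact skips the consistency argument, so you are more explicit there).

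There is, however, one concrete slip in your expansion step. The criterion in the lemma is continuously updated: the weight matrix $\wh\V(\bb)$ itself depends on $\bb$. Hence the exact first-order condition is not $\wh\A(\wh\bb)\trans\wh\V(\wh\bb)^{-1}\{n^{-1}\sumi\f_i(\wh\bb)\}=\0$ as you wrote, but contains an additional term coming from differentiating $\wh\V(\bb)^{-1}$, namely (after the paper's scaling) a vector with $k$th entry $\frac{n^{1/2}}{2}\{n^{-1}\sumi\f_i(\bb)\}\trans\,\partial\{\wh\V(\bb)^{-1}\}/\partial\beta_k\,\{n^{-1}\sumi\f_i(\bb)\}$ evaluated at $\wh\bb$. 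The paper writes this term out explicitly and disposes of it by noting that $n^{-1}\sumi\f_i(\wh\bb)=O_p(n^{-1/2})$, so the term is $n^{1/2}\cdot O_p(n^{-1})=O_p(n^{-1/2})$ and can be folded into the remainder. Your plan never acknowledges this term, so as written the stated first-order identity is false; the fix is either to include and bound it as above (which requires the same a priori $O_p(n^{-1/2})$ control of the empirical moment that you already invoke for $R_n$), or to work with a weight matrix evaluated at a preliminary estimate, which is not the estimator the lemma defines. With that term added and bounded, your argument goes through and coincides with the paper's proof.
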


\begin{proof}
The GMM
 estimator $\wh\bb$ is obtained by minimizing
$\{\sumi\f_i(\bb)\}\wh\V(\bb)^{-1}\{\sumi\f_i(\bb)\}$. 
This entails
\bse
\0&=&
\wh\A(\wh\bb)\trans\wh\V(\wh\bb)^{-1}\{n^{-1/2}\sumi\f_i(\wh\bb)\}
+\frac{n^{1/2}}{2}[\{\frac{1}{n}\sumi\f_i(\bb)\}\frac{\partial\{\wh\V(\bb)^{-1}\}}{\partial\beta_k}\{\frac{1}{n}\sumi\f_i(\bb)\}]_{k=1}^p\\
&=&\wh\A(\bb)\trans\wh\V(\bb)^{-1}\{n^{-1/2}\sumi\f_i(\wh\bb)\}+O_p(n^{-1/2})\\
&=&\A(\bb^*)\trans\V(\bb^*)^{-1}\{n^{-1/2}\sumi\f_i(\bb^*)\}
+\A(\bb^*)\trans\V(\bb^*)^{-1}\A(\bb^*)n^{1/2}(\wh\bb-\bb)
+O_p(n^{-1/2}),
\ese
hence
\bse
n^{1/2}(\wh\bb-\bb^*)=-\{\A(\bb^*)\trans\V(\bb^*)^{-1}\A(\bb^*)\}^{-1}
\A(\bb^*)\trans\V(\bb^*)^{-1}\{n^{-1/2}\sumi\f_i(\bb^*)\}+O_p(n^{-1/2}).
\ese
\end{proof}

\begin{proof}[Proof of Theorem \ref{thm:asympt}]
Using Lemma \ref{lem:gmm} we can write
\bse
n^{-1/2}(\wh\btheta-\btheta)
&=&n^{-1/2}\sumi\g_i(\wh\bb)\\
&=&n^{-1/2}\sumi\{\g_i(\wh\bb)-\g_i(\bb^*)\}
+n^{-1/2}\sumi\g_i(\bb^*)\\
&=&-\B(\bb^*)
\{\A(\bb^*)\trans\V(\bb^*)^{-1}\A(\bb^*)\}^{-1}
\A(\bb^*)\trans\V(\bb^*)^{-1}\{n^{-1/2}\sumi\f_i(\bb^*)\}\\
&&+n^{-1/2}\sumi\g_i(\bb^*)
+O_p(n^{-1/2}).
\ese
When either (\ref{eq:pispec}) and/or (\ref{eq:mspec}) hold, we already
know that $E\{\g_i(\bb^*)\}=\0$. Thus, under regularity conditions, 
$\sqrt{n}(\wh\btheta-\btheta)$ has asymptotic normal distribution with
mean zero and variance 
\bse
\bSig&=&
\var\left[
-\B(\bb^*)
\{\A(\bb^*)\trans\V(\bb^*)^{-1}\A(\bb^*)\}^{-1}
\A(\bb^*)\trans\V(\bb^*)^{-1}\f_i(\bb^*)
+\g_i(\bb^*)\right]\\
&=&\B(\bb^*)
\{\A(\bb^*)\trans\V(\bb^*)^{-1}\A(\bb^*)\}^{-1}\B(\bb^*)\trans
+\C(\bb^*)\\
&&-\B(\bb^*)\{\A(\bb^*)\trans\V(\bb^*)^{-1}\A(\bb^*)\}^{-1}
\A(\bb^*)\trans\V(\bb^*)^{-1}\D(\bb^*)\\
&&-\D(\bb^*)\trans
[\B(\bb^*)\{\A(\bb^*)\trans\V(\bb^*)^{-1}\A(\bb^*)\}^{-1}
\A(\bb^*)\trans\V(\bb^*)^{-1}]\trans,
\ese
where
$\C(\bb^*)\equiv E\{\g_i(\bb^*)^{\otimes2}\}$ and
$\D(\bb^*)\equiv E\{\f_i(\bb^*)\g_i(\bb^*)\trans\}$.
\end{proof}

\begin{proof}[Proof of Corollary \ref{cor:asympt}]
When all models are correctly specified, i.e. (\ref{eq:pispec}-\ref{eq:mspec}) hold, we have $\bb^*=\bb_0$. Then
\bse
\A_k(\bb_0)&=&E\left\{
-\frac{\B(k,\X_i)\pi'_\bb(k,\X_i,\bb_0)\trans
}{\pi(k,\X_i,\bb_0)}\right\}, \\
\B_{k}(\bb_0)&=&E\left\{
-\frac{m(k,\X_i)\pi'_\bb(k,\X_i,\bb_0)\trans
}{\pi(k,\X_i,\bb_0)}\right\}, \\
\V_{kl}(\bb_0)&=&E\left[
\left\{\frac{I(k=l)}{\pi(k,\X_i,\bb_0)}-1\right\}\B(k,\X_i)   \B(l,\X_i)\trans\right],  \\
\C_{kl}(\bb_0)&=&E\left\{I(k=l) \frac{m(k,\X_i)^2+v(k,\X_i)}{\pi(k,\X_i,\bb_0)}-m(k,\X_i)m(l,\X_i)\right\}\\
&&+E\left([m(k,\X_i)-E\{m(k,\X_i)\}][m(l,\X_i)-E\{m(l,\X_i)\}]\right),\\
\D_{kl}(\bb_0)&=&E\left[\left\{\frac{I(k=l)}{\pi(k,\X_i,\bb_0)}
-1\right\}\B(k,\X_i) m(l,\X_i) 
\right],
\ese
and
$\A(\bb_0)=\{\A_1(\bb_0)\trans, \dots, \A_K(\bb_0)\trans\}\trans$, 
$\B(\bb_0)=\{\B_1(\bb_0) \trans, \dots, \B_K(\bb_0)\trans\}\trans$, 
$\V(\bb_0)=\{\V_{kl}(\bb_0)\}_{k,l=1}^K$,
$\C(\bb_0)=\{\C_{kl}(\bb_0)\}_{k,l=1}^K$,
$\D(\bb_0)=\{\D_{kl}(\bb_0)\}_{k,l=1}^K$. 

Note that $\ba\trans\A_k(\bb_0)=\B_k(\bb_0)$ and
$\V_{kl}(\bb_0)\ba=\D_{kl}(\bb_0)$, so
$(\I_{K+1}\otimes\ba\trans)\A(\bb_0)=\B(\bb_0)$ and $\V(\bb_0)(\1_{K,K}\otimes\ba)=\D(\bb^*)$.
Thus, 
$\bSig=\C(\bb_0)-\B(\bb_0)
\{\A(\bb_0)\trans\V(\bb_0)^{-1}\A(\bb_0)\}^{-1}\B(\bb_0)\trans$.
\end{proof}

\begin{proof}[Proof of Corollary \ref{cor:samedim}]
	Here we have set the dimension of 
$\f_i(\bb)$ to be the same as the dimension of $\bb$ hence we can solve
$\sum\f_i(\bb)=\0$ directly. As a consequence, we can write
\bse
\0&=&n^{-1/2}\sumi\f_i(\wh\bb)+O_p(n^{-1/2})\\
&=&n^{-1/2}\sumi\f_i(\bb^*)
+\A(\bb^*)n^{1/2}(\wh\bb-\bb)
+O_p(n^{-1/2}),
\ese
hence
\bse
n^{1/2}(\wh\bb-\bb)=-\A(\bb^*)^{-1}
\{n^{-1/2}\sumi\f_i(\bb^*)\}+O_p(n^{-1/2}).
\ese
This leads to
\bse
n^{-1/2}(\wh\btheta-\btheta)
&=&n^{-1/2}\sumi\g_i(\wh\bb)\\
&=&n^{-1/2}\sumi\{\g_i(\wh\bb)-\g_i(\bb_0)\}
+n^{-1/2}\sumi\g_i(\bb_0)\\
&=&-\B(\bb_0)
\A(\bb_0)^{-1}
\{n^{-1/2}\sumi\f_i(\bb_0)\}
+n^{-1/2}\sumi\g_i(\bb_0)
+O_p(n^{-1/2})\\
&=&-(\I_{K+1}\otimes\ba\trans)
\{n^{-1/2}\sumi\f_i(\bb_0)\}
+n^{-1/2}\sumi\g_i(\bb_0)
+O_p(n^{-1/2}).
\ese
Thus, 
$\sqrt{n}(\wh\btheta-\btheta)$ has asymptotic normal distribution with mean zero and variance
\bse
\bSig&=&
\var\left\{\g_i(\bb_0)
-(\I_{K+1}\otimes\ba\trans)
\f_i(\bb_0)\right\}\\
&=&\var\left(\left[\begin{array}{c}
\frac{I(A=0)Y}{\pi(0,\X)}-E\{m(0,\X)\}\\
\vdots\\
\frac{I(A=K)Y}{\pi(K,\X)}-E\{m(K,\X)\}
\end{array}\right]-(\I_{K+1}\otimes\ba\trans)\left[
\begin{array}{c}
\left\{\frac{I(A=0)}{\pi(0,\X)}-1\right\}\B(0,\X)\\
\vdots\\
\left\{\frac{I(A=k)}{\pi(K,\X)}-1\right\}\B(K,\X)
\end{array}\right]\right)\\
&=&\var\left(\left[\begin{array}{c}
\frac{I(A=0)Y}{\pi(0,\X)}-E\{m(0,\X)\}\\
\vdots\\
\frac{I(A=K)Y}{\pi(K,\X)}-E\{m(K,\X)\}
\end{array}\right]-\left[
\begin{array}{c}
\left\{\frac{I(A=0)}{\pi(0,\X)}-1\right\}m(0,\X)\\
\vdots\\
\left\{\frac{I(A=k)}{\pi(K,\X)}-1\right\}m(K,\X)
\end{array}\right]\right)\\
&=& \var\left\{\begin{array}{c}
\frac{I(A=0)\{Y-m(0,\X)\}}{\pi(0,\X)}+m(0,\X)-E\{m(0,\X)\}\\
\frac{I(A=1)\{Y-m(1,\X)\}}{\pi(1,\X)}+m(1,\X)-E\{m(1,\X)\}\\
\vdots\\
\frac{I(A=K)\{Y-m(K,X)\}}{\pi(K,\X)}+m(K,\X)-E\{m(K,\X)\}
\end{array}\right\},
\ese
i.e., the $(k,l)$ entry of $\bSig$ is
\bse
\bSig_{kl}=I(k=l)E\left\{\frac{v(k,\X)}{\pi(k,\X)}\right\}
+
E([m(k,\X)-E\{m(k,\X)\}][m(l,\X)-E\{m(l,\X)\}]).
\ese
Compared to the semiparametric efficiency bound obtained in Section \ref{sec:effbound} below, we see that the estimator is asymptotically
efficient. 
\end{proof}

\subsubsection{Semiparametric efficiency bound}\label{sec:effbound}
The original model can be written in general as
\be\label{eq:model}
f_{\X,A,Y}(\x,a,y)=f_\X(\x) \prod_{k=0}^K [\pi(k,\x)
f_{\epsilon\mid(A,\X)}\{y-m(k,\x),k,\x\}]^{I(a=k)},
\ee
where $\pi(k,\x)$ satisfies $0<\pi(k,\x)<1$,
$\sum_{k=0}^K\pi(k,\x)=1$ and
$f_{\epsilon\mid(A, \X)}\{y-m(k,\x),k,\x\}$ satisfies 
$\int f_{\epsilon\mid(A, \X)}(\epsilon,k,\x)d\epsilon=1$ and
$\int \epsilon
f_{\epsilon\mid(A,\X)}(\epsilon,k,\x)d\epsilon=0$ for all $k=0, \dots, K$.
The parameter of interest is 
$\btheta=(\theta_1, \dots, \theta_K)\trans$, where
$\theta_k=E\{m(k,\X)\}$. Here, we sometimes write
$\epsilon=y-m(a,\x)$ for convenience.
Consider an arbitrary parametric submodel
\bse
f_{\X,A,Y}(\x,a,y,\bd)=f_\X(\x,\bzeta) \prod_{k=0}^K [\pi(k,\x,\bb)
f_{\epsilon\mid(A,\X)}\{y-m(k,\x,\ba),k,\x,\bg\}]^{I(a=k)},
\ese
where $\bd=(\bzeta\trans,\bb\trans,\ba\trans,\bg\trans)\trans$.
We get the score function
$\bS_\bd=(\bS_\bzeta\trans,\bS_\bb\trans,\bS_\ba\trans,\bS_\bg\trans)\trans$,
where
\bse
\bS_\bzeta&=&\frac{\partial f_\X(\x,\bzeta)/\partial\bzeta}{f_\X(\x,\bzeta)},\\
\bS_\bb&=&\sum_{k=0}^K\left\{I(A=k)\frac{\partial\pi(k,\x,\bb)/\partial\bb}{\pi(k,\x,\bb)}\right\},\\
\bS_\ba&=&\sum_{k=0}^KI(A=k)\left[-\frac{\partial m(k,\x,\ba)}{\partial\ba}\frac{\partial
f_{\epsilon\mid(A,\X)}\{y-m(k,\x,\ba),k,\x,\bg\}
/\partial\{y-m(k,\x,\ba)\}}{f_{\epsilon\mid(A,\X)}\{y-m(k,\x,\ba),k,\x,\bg\}}\right],\\
\bS_\bg&=&\sum_{k=0}^K\left[I(A=k)\frac{\partial f_{\epsilon\mid(A,\X)}\{y-m(k,\x,\ba),k,\x,\bg\}
/\partial\bg}{f_{\epsilon\mid(A,\X)}\{y-m(k,\x,\ba),k,\x,\bg\}}\right].
\ese
The tangent space of (\ref{eq:model}) is $\calT=\calT_\bzeta+\calT_\bb+\calT_\ba+\calT_\bg$,
where
\bse
\calT_\bzeta&=&[\a(\X): E\{\a(\X)\}=\0],\\
\calT_\bb&=&[\a(A,\X): \sum_{k=0}^K\a(k,\x)\pi(k,\x)=\0],\\
\calT_\ba&=&\left[\a(A,\X) \frac {f_{\epsilon\mid(A,\X)}'\{Y-m(A,\X),A,\X\}}
{f_{\epsilon\mid(A,\X)}\{Y-m(A,\X),A,\X\}}:\forall \a(A,\X)\right],\\
\calT_\bg&=&[\a(\epsilon,A,\X):E\{\a(\epsilon,A,\X)\mid A,\X\}=\0, E\{\epsilon\a(\epsilon,A,\X)\mid A,\X\}=\0].
\ese
The parameter of interest in the submodel is
\bse
\btheta(\bzeta,\bb,\ba,\bg)
=[E\{m(0,\X,\ba)\}, \dots, E\{m(K,\X,\ba)\}]\trans,
\ese
  where
\bse
E\{m(k,\X,\ba)\}=\int m(k,\x,\ba) f_\X(\x,\bzeta) d\mu(\x).
\ese
Thus, 
\bse
\frac{\partial\btheta(\bzeta,\bb,\ba,\bg)}{\partial\ba\trans}&=&
\left[
E \left\{\frac{\partial m(0,\X,\ba)}{\partial\ba}\right\},
 \dots,
E \left\{\frac{\partial
    m(K,\X,\ba)}{\partial\ba}\right\}\right]\trans\Big|_{\ba=\ba_0},\\
\frac{\partial\btheta(\bzeta,\bb,\ba,\bg)}{\partial\bzeta\trans}&=&
\left[
E \left\{m(0,\X)\bS_\bzeta\right\},
 \dots,
E \left\{
    m(K,\X)\bS_\bzeta\right\}\right]\trans\Big|_{\bzeta=\bzeta_0},
\ese
while 
$\partial\btheta(\bzeta,\bb,\ba,\bg)/\partial\bb\trans=\0$ and
$\partial\btheta(\bzeta,\bb,\ba,\bg)/\partial\bg\trans=\0$.

Now consider 
\bse
\bphi=
\left[I(A=0)\frac{Y-m(0,\X)}{\pi(0,\X)}+m(0,\X_i), \dots,
 I(A=K)\frac{Y-m(K,\X)}{\pi(K,\X)}+m(K,\X_i)\right]\trans.
\ese
Denote $\wt\phi_k=I(A=k)\frac{Y-m(k,\X)}{\pi(k,\X)}+m(k,\X_i)$.
We can easily verify that
\bse
&&E(\phi_k\bS_\bb)\\
&=&E\left[\left\{I(A=k)\frac{Y-m(k,\X)}{\pi(k,\X)}+m(k,\X_i)\right\}
\left\{\sum_{l=0}^KI(A=l)\frac{\partial\pi(l,\x,\bb)/\partial\bb}{\pi(l,\x,\bb)}\right\}\right]\\
&=&E\left[\left\{I(A=k)\frac{Y-m(k,\X)}{\pi(k,\X)}\frac{\partial\pi(k,\x,\bb)/\partial\bb}{\pi(k,\x,\bb)}\right\}\right]
+E\left[m(k,\X_i)\left\{
\sum_{l=0}^KI(A=l)\frac{\partial\pi(l,\x,\bb)/\partial\bb}{\pi(l,\x,\bb)}\right\}\right]\\
&=&E\left[\{Y^k-m(k,\X)\}\frac{\partial\pi(k,\x,\bb)/\partial\bb}{\pi(k,\x,\bb)}\right]
+E\left[m(k,\X_i)
\left\{\frac{\partial
    \sum_{l=0}^K\pi(l,\x,\bb)}{\partial\bb}\right\}\right]\\
&=&\0,
\ese
and
\bse
&&E(\phi_k\bS_\bg)\\
&=&E\left(\left\{I(A=k)\frac{Y-m(k,\X)}{\pi(k,\X)}+m(k,\X)\right\}\right.\\
&&\left.\times\left[\sum_{l=0}^KI(A=l)\frac{\partial f_{\epsilon\mid(A,\X)}\{Y-m(l,\X,\ba),l,\X,\bg\}
/\partial\bg}{f_{\epsilon\mid(A,\X)}\{Y-m(l,\X,\ba),l,\X,\bg\}}\right]
\right)\\
&=&E\left(I(A=k)\frac{Y-m(k,\X)}{\pi(k,\X)}
\left[\sum_{l=0}^KI(A=l)\frac{\partial f_{\epsilon\mid(A,\X)}\{Y-m(l,\X,\ba),l,\X,\bg\}
/\partial\bg}{f_{\epsilon\mid(A,\X)}\{Y-m(l,\X,\ba),l,\X,\bg\}}\right]
\right)\\
&&+E\left(m(k,\X)\left[\sum_{l=0}^KI(A=l)\frac{\partial f_{\epsilon\mid(A,\X)}\{Y-m(l,\X,\ba),l,\X,\bg\}
/\partial\bg}{f_{\epsilon\mid(A,\X)}\{Y-m(l,\X,\ba),l,\X,\bg\}}\right]
\right)\\
&=&E\left\{\frac{\partial}{\partial\bg}\int
\epsilon f_{\epsilon\mid(A,\X)}(\epsilon,k,\X,\bg) d\epsilon
\right\}
+E\left[m(k,\X)\left\{\sum_{l=0}^K\pi(l,\X)
\frac{\partial}{\partial\bg}\int f_{\epsilon\mid(A,\X)}(\epsilon,l,\X,\bg)d\epsilon
\right\}
\right]\\
&=&\0.
\ese
Hence
$E(\bphi\bS_\bb\trans)=\0$ and $E(\bphi\bS_\bg\trans)=\0$. 
Further,
\bse
E(\phi_k\bS_\bzeta)
&=&E\left[
\left\{I(A=k)\frac{Y-m(k,\X)}{\pi(k,\X)}+m(k,\X_i)\right\}
\frac{\partial
  f_\X(\x,\bzeta)/\partial\bzeta}{f_\X(\x,\bzeta)}\right]\\
&=&\0+E\left\{m(k,\X_i)
\frac{\partial
  f_\X(\x,\bzeta)/\partial\bzeta}{f_\X(\x,\bzeta)}\right\}\\
&=&E\{m(k,\X)\bS_\bzeta(\X,\bzeta)\},
\ese
and
\bse
&&E(\phi_k\bS_\ba)\\
&=&E\left(
\left\{I(A=k)\frac{Y-m(k,\X)}{\pi(k,\X)}+m(k,\X)\right\}\right.\\
&&\left.\times
\left[\sum_{l=0}^KI(A=l)\frac{-\partial m(l,\X,\ba)}{\partial\ba}\frac{\partial
f_{\epsilon\mid(A,\X)}\{Y-m(l,\X,\ba),l,\X,\bg\}
/\partial\{Y-m(l,\X,\ba)\}}{f_{\epsilon\mid(A,\X)}\{Y-m(l,\X,\ba),l,\X,\bg\}}\right]\right)\\
&=&E\left[
\{Y^k-m(k,\X)\}
\frac{-\partial m(k,\X,\ba)}{\partial\ba}\frac{\partial
f_{\epsilon\mid(A,\X)}\{Y^k-m(k,\X,\ba),k,\X,\bg\}
/\partial\{Y^k-m(k,\X,\ba)\}}{f_{\epsilon\mid(A,\X)}\{Y^k-m(k,\X,\ba),k,\X,\bg\}}\right]\\
&&+E\left(
m(k,\X)
\left[\sum_{l=0}^K\pi(l,\X)\frac{-\partial m(l,\X,\ba)}{\partial\ba}\frac{\partial
f_{\epsilon\mid(A,\X)}\{Y^l-m(l,\X,\ba),l,\X,\bg\}
/\partial\{Y^l-m(l,\X,\ba)\}}{f_{\epsilon\mid(A,\X)}\{Y-m(l,\X,\ba),l,\X,\bg\}}\right]\right)\\
&=&E\left\{
\frac{-\partial m(k,\X,\ba)}{\partial\ba}\int 
\epsilon
\frac{\partial
f_{\epsilon\mid(A,\X)}(\epsilon,k,\X,\bg)}{\partial\epsilon} d\epsilon
\right\}\\
&&+E\left[
m(k,\X)
\left\{\sum_{l=0}^K\pi(l,\X)\frac{-\partial m(l,\X,\ba)}{\partial\ba}
\int
\frac{\partial
f_{\epsilon\mid(A,\X)}(\epsilon,l,\X,\bg)
}{\partial\epsilon}
d\epsilon
\right\}\right]\\
&=&E\left\{
\frac{\partial m(k,\X,\ba)}{\partial\ba}\right\},
\ese
where $\ba, \bzeta$ are evaluated at the true value $\ba_0, \bzeta_0$.
Therefore,
\bse
E(\bphi\bS_\bzeta\trans)= \left[
E\{m(0,\X)\bS_\bzeta(\X,\bzeta)\}, \dots, E\{m(K,\X)\bS_\bzeta(\X,\bzeta)\}\right]\trans
={\partial\btheta(\bzeta,\bb,\ba,\bg)}/{\partial\bzeta\trans}.
\ese
and
\bse
E(\bphi\bS_\ba\trans)= \left[E\{\partial
m(0,\x,\ba)/\partial\ba\}, \dots, \partial
E\{m(K,\x,\ba)/\partial\ba\}\right]\trans={\partial\btheta(\bzeta,\bb,\ba,\bg)}/{\partial\ba\trans}.
\ese 
Thus, $\bphi$ satisfies
$E(\bphi\bS_\bd\trans)=\partial\btheta(\bd)/\partial\bd\trans$. Because
the submodel is arbitrary,
$\bphi$ is an influence function of $\btheta$. 
We now try to obtain $\Pi(\bphi\mid\calT)$ so we can obtain the
efficient influence function. 
Further, we decompose $\bphi$ as 
$\bphi=(\bphi_1+\bphi_2+\bphi_3+\c)$, where
\bse
\bphi_1&=&\left(\begin{array}{c}
\frac{I(A=0)}{\pi(0,\X)}\left[Y-m(0,\X)+v(0,\X)\frac {f_{\epsilon\mid(A,\X)}'\{Y-m(0,\X),0,\X\}}
{f_{\epsilon\mid(A,\X)}\{Y-m(0,\X),0,\X\}}\right]\\
\vdots\\
\frac{I(A=K)}{\pi(K,\X)}\left[Y-m(K,\X)+v(K,\X)\frac {f_{\epsilon\mid(A,\X)}'\{Y-m(K,\X),K,\X\}}
{f_{\epsilon\mid(A,\X)}\{Y-m(K,\X),K,\X\}}\right]
\end{array}\right),\\
\bphi_2&=&-\left[\begin{array}{c}
\frac{I(A=0)}{\pi(0,\X)}v(0,\X)\frac {f_{\epsilon\mid(A,\X)}'\{Y-m(0,\X),0,\X\}}
{f_{\epsilon\mid(A,\X)}\{Y-m(0,\X),0,\X\}}\\
\vdots\\
\frac{I(A=K)}{\pi(K,\X)}v(K,\X)\frac {f_{\epsilon\mid(A,\X)}'\{Y-m(K,\X),K,\X\}}
{f_{\epsilon\mid(A,\X)}\{Y-m(K,\X),K,\X\}}
\end{array}\right],\\
\bphi_3&=&\left[\begin{array}{c}
m(0,\X)-E\{m(0,\X)\}\\
\vdots\\
m(K,\X)-E\{m(K,\X)\}\end{array}\right]
\ese
and $\c=[E\{m(0,\X)\}, \dots, E\{m(K,\X)\}]\trans$, where $v(k,\X)\equiv\var(Y^k\mid\X,A=k)$.
We can verify that $\bphi_1\in\calT_\bg$,
$\bphi_2\in\calT_\ba$, and $\bphi_3\in\calT_\bzeta$,
while $\c$ is a constant. Then
$\bphi-\c$ is the efficient influence function. 
Thus, the efficient variance is
$
\bSig_{\rm eff}=\var(\bphi),
$
where the $(k,l)$ entry of $\bSig_{\rm eff}$ is
\bse
\bSig_{\rm eff,k,l}=I(k=l)E\{v(k,\X)/\pi(k,\X)\}
+E([m(k,\X)-E\{m(k,\X)\}][m(l,\X)-E\{m(k,\X)\}]).
\ese
When $K=1$,
this agrees with the special case corresponding to the 
  binary treatments \citep{hahn:98}, and when $K>1$, with earlier
  results \citep{CATTANEO2010}. 

\subsection{Continuous treatment: derivations}\label{app:cont}

We prove all results under a general weight function
$w(A_j)$, where $w(A_j)=\sumi K_l(A_i-A_j)$ in the main paper.

\subsubsection{Convergence rate of $\wh\bb$}

\begin{proof}[Proof of Lemma \ref{lem:betaconv}]
	 From (\ref{eq:obj}), $\bb^*$ satisfies
\bse
\0
&=&E_{j}\left(
\left[E_i\left\{
\frac{K_l(A_i-A_j) \pi'_\bb(A_j,\X_i,\bb^*)}{\pi^2(A_j,\X_i,\bb^*)}
\B\trans(A_j,\X_i)\right\}
\right]\right.\\
&&\left.\times w(A_j)
E_i\left[
\left\{\frac{K_l(A_i-A_j)}{\pi(A_j,\X_i,\bb^*)}-1\right\}\B(A_j,\X_i)\right]
\right)\\
&=&E_{j}\left(
\left[E_i\left\{
\frac{\pi_0(A_j, \X_i)\pi'_\bb(A_j,\X_i,\bb^*)}{\pi^2(A_j,\X_i,\bb^*)}
\B\trans(A_j,\X_i)\right\}
\right]\right.\\
&&\left.\times w(A_j)
E_i\left[
\left\{\frac{\pi_0(A_j,\X_i)}{\pi(A_j,\X_i,\bb^*)}-1\right\}\B(A_j,\X_i)\right]
\right)+O(l^2),\\
&=&E_{j}\left(
\U(A_j,\bb^*)w(A_j)\right.\\
&&\left.\times 
E_i\left[
\left\{\frac{\pi_0(A_j,\X_i)}{\pi(A_j,\X_i,\bb^*)}-1\right\}\B(A_j,\X_i)\right]
\right)+O(l^2)\\
&=&E_{j}\left(
\U(A_j,\bb^*)w(A_j)
E_i\left[
\left\{\frac{\pi_0(A_j,\X_i)}{\pi(A_j,\X_i,\bb^*)}-1\right\}\B(A_j,\X_i)
\right]
\right)+O(l^2),
\ese
where
\bse
\U(a_j,\bb^*)\equiv
E\left\{
\frac{\pi_0(a_j,\X)\pi'_\bb(a_j,\X,\bb^*)}{\pi^2(a_j,\X,\bb^*)}
\B(a_j,\X)\trans\right\}.
\ese
We now investigate the convergence rate of $\wh\bb$
from
(\ref{eq:betaconint}). 
We note that
\bse
\0&=&\frac{1}{n}\sumj\\
&&\left[\frac{1}{n} \sumi\left\{
\frac{K_l(A_i-A_j) \pi'_\bb(A_j,\X_i,\wh\bb)}{\pi^2(A_j,\X_i,\wh\bb)}
\B\trans(A_j,\X_i)\right\}
\right]\\
&&\times w(A_j)
\left( \frac{1}{n^{1/2}}\sumi\left[
\left\{\frac{K_l(A_i-A_j)}{\pi(A_j,\X_i,\wh\bb)}-1\right\}\B(A_j,\X_i)\right]
\right)\\
&=&\frac{1}{n}\sumj\\
&&\left[E\left\{
\frac{K_l(A_i-a_j) \pi'_\bb(a_j,\X_i,\bb^*)}{\pi^2(a_j,\X_i,\bb^*)}
\B(a_j,\X_i)\trans
\right\}+o_p(1)
\right]\\
&&\times w(A_j)
\left( \frac{1}{n^{1/2}}\sumi\left[
\left\{\frac{K_l(A_i-A_j)}{\pi(A_j,\X_i,\bb^*)}-1\right\}\B(A_j,\X_i)\right]
\right)\\
&&-\frac{1}{n}\sumj\\
&&\left(
\left[E\left\{
\frac{K_l(A_i-a_j) \pi'_\bb(a_j,\X_i,\bb^*)}{\pi^2(a_j,\X_i,\bb^*)}
\B(a_j,\X_i)\trans\right\}
\right]^{\otimes2}\right.\\
&&\left. w(A_j)+o_p(1)\right)n^{1/2}(\wh\bb-\bb^*)\\
&=&\frac{1}{n^{3/2}}\sumi\sumj
\U(A_j,\bb^*)
w(A_j)\\
&&\times
\left\{\frac{K_l(A_i-A_j)}{\pi(A_j,\X_i,\bb^*)}-1\right\}\B(A_j,\X_i)
\\
&&-E
\left[\{\U(A_j,\bb^*)\}^{\otimes2}w(A_j)\right]n^{1/2}(\wh\bb-\bb^*)+o_p(1).
\ese
We have
\bse
&&\frac{1}{n^{3/2}}\sumi\sumj
\U(A_j,\bb^*)w(A_j)
\left\{\frac{K_l(A_i-A_j)}{\pi(A_j,\X_i,\bb^*)}-1\right\}\B(A_j,\X_i)\\
&=&\frac{1}{n^{1/2}}\sumj
\U(a_j,\bb^*)w(a_j)
E_i\left[\left\{\frac{K_l(A_i-a_j)}{\pi(a_j,\X_i,\bb^*)}-1\right\}\B(a_j,\X_i)\right]\\
&&+\frac{1}{n^{1/2}}\sumi
E_j\left[\U(A_j,\bb^*)w(A_j)
\left\{\frac{K_l(a_i-A_j)}{\pi(A_j,\x_i,\bb^*)}-1\right\}\B(A_j,\x_i)\right]\\
&&-n^{1/2}
E_{ij}\left[\U(A_j,\bb^*)w(A_j)
\left\{\frac{K_l(A_i-A_j)}{\pi(A_j,\X_i,\bb^*)}-1\right\}\B(A_j,\X_i)\right]
+o_p(1)\\
&=&\frac{1}{n^{1/2}}\sumj
\U(a_j,\bb^*)w(a_j)
E_i\left\{\frac{\pi_0(a_j,\X_i)}{\pi(a_j,\X_i,\bb^*)}\B(a_j,\X_i)-\B(a_j,\X_i)\right\}\\
&&+\frac{1}{n^{1/2}}\sumi
\left[f_A(a_i)\U(a_i,\bb^*)w(a_i)
\frac{\B(a_i,\x_i)}{\pi(a_i,\x_i,\bb^*)}
-E_j\left\{\U(A_j,\bb^*)w(A_j)\B(A_j,\x_i)
\right\}\right]
\\
&&-n^{1/2}
E_{j}\left[\U(A_j,\bb^*)w(A_j)
E_i\left\{\frac{\pi_0(A_j,\X_i)}{\pi(A_j,\X_i,\bb^*)}\B(A_j,\X_i)-\B(A_j,\X_i)\right\}
\right]\\
&&+o_p(1)+O_p(n^{1/2}l^2).
\ese

Thus, when $nl^4\to0$, we get
\bse
&&E\left\{\U(A_j,\bb^*)^{\otimes2}w(A_j)\right\}n^{1/2}(\wh\bb-\bb^*)\\
&=&\frac{1}{n^{1/2}}\sumj\U(a_j,\bb^*)
w(a_j)
E_i\left[\left\{\frac{\pi_0(a_j,\X_i)}{\pi(a_j,\X_i,\bb^*)}-1\right\}\B(a_j,\X_i)\right]\\
&&+\frac{1}{n^{1/2}}\sumi
\left[f_A(a_i)\U(a_i,\bb^*)w(a_i)
\frac{\B(a_i,\x_i)}{\pi(a_i,\x_i,\bb^*)}
-E_j\{\U(A_j,\bb^*)w(A_j)\B(A_j,\x_i)
\}\right]\\
&&-n^{1/2}
E_j\left(\U(A_j,\bb^*)w(A_j)
E_i\left[\left\{\frac{\pi_0(A_j,\X_i)}{\pi(A_j,\X_i,\bb^*)}-1\right\}\B(A_j,\X_i)\right]
\right)\\
&&+o_p(1)+O_p(n^{1/2}l^2)\\
&=&\frac{1}{n^{1/2}}\sumi
\U(a_i,\bb^*)w(a_i)
E_k\left[\left\{\frac{\pi_0(a_i,\X_k)}{\pi(a_i,\X_k,\bb^*)}-1\right\}\B(a_i,\X_k)\right]\\
&&+\frac{1}{n^{1/2}}\sumi
\left[f_A(a_i)\U(a_i,\bb^*)w(a_i)
\frac{\B(a_i,\x_i)}{\pi(a_i,\x_i,\bb^*)}
-E_{j}\left\{\U(A_j,\bb^*)w(A_j)\B(A_j,\x_i)
\right\}\right]\\
&&+o_p(1).
\ese
Obviously, 
\bse
E_i\left(\U(A_i,\bb^*)w(A_i)
E_k\left[\left\{\frac{\pi_0(A_i,\X_k)}{\pi(A_i,\X_k,\bb^*)}-1\right\}\B(A_i,\X_k)\right]\right)
=O(l^2)
\ese
due to the definition of $\bb^*$. Further, we can verify that
\bse
&&E_i\left[f_A(A_i)\U(A_i,\bb^*)w(A_i)
\frac{\B(A_i,\X_i)}{\pi(A_i,\X_i,\bb^*)}
-E_{j}\left\{\U(A_j,\bb^*)w(A_j)\B(A_j,\X_i)
\right\}\right]\\
&=&E_{i,j}\left\{\U(A_j,\bb^*)w(A_j)
\frac{\pi_0(A_j,\X_i) \B(A_j,\X_i)}{\pi(A_j,\X_i,\bb^*)}\right\}
-E_{i,j}\left\{\U(A_j,\bb^*)w(A_j)\B(A_j,\X_i)
\right\}\\
&=&
E_{j}\left(\U(A_j,\bb^*)w(A_j)
E_i\left[\left\{\frac{\pi_0(A_j,\X_i)}{\pi(A_j,\X_i,\bb^*)}-1\right\}\B(A_j,\X_i)\right]\right)\\
&=&O(l^2)
\ese
also due to the definition of $\bb^*$. Thus, as long as $nl^4\to0$, $\wh\bb-\bb^*=O_p(n^{-1/2})$. 
\end{proof}

\subsubsection{Robustness and asymptotic bias and variance}
\begin{proof}[Proof of Theorem \ref{th:a}]
When model (\ref{eq:contpispec}) holds, 
we can easily check that
the expectation of the left hand side of (\ref{eq:betacon1}) at the
true parameter value $\bb_0$ and any function $m(a,\x)=\B(a,\x)\trans\bg$ satisfies
\bse
&&E\left[\left\{\frac{K_h(A_i-a)}{\pi_0(a,\X_i)}-1\right\}m(a,\X_i)
\right]\\
&=&E\left[ \left\{\frac{E\{K_h(A_i-a)\mid\X_i\}}{\pi_0(a,\X_i)}-1\right\} m(a,\X_i)
\right]\\
&=&E\left[ \left\{\frac{\int K_h(A_i-a)\pi_0(A_i,\X_i)dA_i}{\pi_0(a,\X_i)}-1\right\} m(a,\X_i)
\right]\\
&=&E\left[ \left\{\frac{\int K(t)\pi_0(a+ht,\X_i)dt}{\pi_0(a,\X_i)}-1\right\} m(a,\X_i)
\right]\\
&=&E\left[ \left\{\frac{\int K(t)\pi_0(a,\X_i)dt}{\pi_0(a,\X_i)}-1\right\} m(a,\X_i)
\right]+O(h^2)\\
&=&O(h^2).
\ese
Thus, because the
nonparametric estimation convergence rate is slower than
$O_p(n^{-1/2})$, by Lemma \ref{lem:betaconv} we can fix $\bb$ at
$\bb_0$ in the following analysis, and the first order bias and variance property of
$\wh\theta(a)$ will not be affected.

Hence, for (\ref{eq:thetacon}), we have
\bse
E\{\wh\theta(a)\}
&=&E\left\{
\frac{K_h(A_i-a)Y_i}{\pi_0(a,\X_i)}\right\}+O(n^{-1/2})\\
&=&E\left\{
\frac{K_h(A_i-a)Y_i(A_i)}{\pi_0(a,\X_i)}\right\}+O(n^{-1/2})\\
&=&E\left\{
\frac{K_h(A_i-a)m( A_i, \X_i)}{\pi_0(a,\X_i)}\right\}+O(n^{-1/2})\\
&=&E\left[m(a,\X_i)  
+\frac{\partial^2 \{\pi_0(a,\X_i) m(a,\X_i)\}}{\pi_0(a,\X_i)\partial
  a^2}
\frac{h^2}{2}\int
t^2K(t)dt\right]\\
&&+O(h^4+n^{-1/2})\\
&=&\theta(a)
+E\left[\frac{\partial^2 \{\pi_0(a,\X_i) m(a,\X_i)\}}{\pi_0(a,\X_i)\partial
  a^2}\right]
\frac{h^2}{2}\int
t^2K(t)dt+O(h^4+n^{-1/2}).
\ese

The variance is calculated as
\bse
\var\{\wh\theta(a)\}
=\var\left[
n^{-1}\sumi
\left\{
\frac{K_h(A_i-a)Y_i}{\pi_0(a,\X_i)}
\right\}+O_p(n^{-1/2})
\right].
\ese
Now, recall that the variance of $Y_i(A_i)$ conditional on $\X_i, A_i$
is denoted $\sigma^2(A_i, \X_i)$, then
\bse
&&E\left[\left\{
\frac{K_h(A_i-a)Y_i}{\pi_0(a,\X_i)}
\right\}^2\right]\\
&=&E\left[\left\{
\frac{K_h(A_i-a)}{\pi_0(a,\X_i)}
\right\}^2
\{m^2(A_i,\X_i)+\sigma^2(A_i,\X_i)\}\right]\\
&=&\frac{\int K^2(t)dt}{h}E\left\{
\frac{m^2(a,\X_i)+\sigma^2(a,\X_i)}{\pi_0(a,\X_i)}
\right\}
+O(h).
\ese
Thus,
\bse
\var\{\wh\theta(a)\}
&=&\var\left[
n^{-1}\sumi
\left\{
\frac{K_h(A_i-a)Y_i}{\pi_0(a,\X_i)}
\right\}+O_p(n^{-1/2})
\right]\\
&=&\frac{\int K^2(t)dt}{nh}E\left\{
\frac{m^2(a,\X_i)+\sigma^2(a,\X_i)}{\pi_0(a,\X_i)}
\right\}\\
&&+O(n^{-1}h+n^{-1}+n^{-1}h^{-1/2}).
\ese
The asymptotic normality is shown in Section \ref{app:contas} below.
\end{proof}

\begin{proof}[Proof of Theorem \ref{th:biasvar1}]
	When model (\ref{eq:contmspec}) is correct, then $\wh\bb$ converges to
$\bb^*$ at root-$n$ rate (Lemma \ref{lem:betaconv}). Thus, 
\bse
E\{\wh\theta(a)\}
&=&E\left\{
\frac{K_h(A_i-a)Y_i}{\pi(a,\X_i,\bb^*)}
\right\}+O(n^{-1/2})\n\\
&=&E\left\{
\frac{K_h(A_i-a)Y_i(A_i)}{\pi(a,\X_i,\bb^*)}
\right\}+O(n^{-1/2})\n\\
&=&E\left[
\frac{K_h(A_i-a)m(A_i, \X_i)}{\pi(a,\X_i,\bb^*)}
\right]+O(n^{-1/2})\n\\
&=&E\left[
\frac{\pi_0(a,\X_i)m(a, \X_i)}{\pi(a,\X_i,\bb^*)}
\right]
+\frac{\int t^2K(t)dt}{2}h^2\n\\
&&\times E
\left[ \frac{\partial^2 \{m(a,\X_i)\pi_0(a,\X_i)\}}{\pi(a,\X_i,\bb^*)\partial a^2
}\right]+O(n^{-1/2})\n\\
&=&E\left[\left\{\frac{\pi_0(a,\X_i)}{\pi(a,\X_i,\bb^*)}-1\right\}m(a, \X_i)
\right]
+E\{m(a,\X_i)\}\n\\
&&+\frac{\int t^2K(t)dt}{2}h^2
 E
\left[\frac{\partial^2 \{m(a,\X_i)\pi_0(a,\X_i)\}}{\pi(a,\X_i,\bb^*) \partial a^2
}\right]+O(n^{-1/2})\n\\
&=&E\{m(a,\X_i)\}
+\frac{\int t^2K(t)dt}{2}h^2
 E
\left[\frac{\partial^2 \{m(a,\X_i)\pi_0(a,\X_i)\}}{\pi(a,\X_i,\bb^*) \partial a^2
}\right]+O(n^{-1/2}).
\ese
The variance is calculated as
\bse
\var\{\wh\theta(a)\}
=\var\left[
n^{-1}\sumi
\frac{K_h(A_i-a)Y_i}{\pi(a,\X_i,\bb^*)}
+O_p(n^{-1/2})
\right].
\ese
Then
\bse
&&E\left[\left\{
\frac{K_h(A_i-a)Y_i}{\pi(a,\X_i,\bb^*)}
\right\}^2\right]\\
&=&E\left[\left\{
\frac{K_h(A_i-a)}{\pi(a,\X_i,\bb^*)}
\right\}^2
\{m^2(A_i,\X_i)+\sigma^2(A_i,\X_i)\}\right]\\
&=&\frac{\int K^2(t)dt}{h}E\left[
\frac{\pi_0(a,\X_i)\{m^2(a,\X_i)+\sigma^2(a,\X_i)\}}{\pi^2(a,\X_i,\bb^*)}
\right]
+O(h).
\ese
Thus,
\bse
\var\{\wh\theta(a)\}
&=&\var\left\{
n^{-1}\sumi
\frac{K_h(A_i-a)Y_i}{\pi(a,\X_i,\bb^*)}
+O_p(n^{-1/2})
\right\}\\
&=&\frac{\int K^2(t)dt}{nh}E\left[
\frac{\pi_0(a,\X_i)\{m^2(a,\X_i)+\sigma^2(a,\X_i)\}}{\pi^2(a,\X_i,\bb^*)}
\right]\\
&&+O(n^{-1}h+n^{-1}+n^{-1}h^{-1/2}).
\ese
The asymptotic normality is shown in Section \ref{app:contas} below.
\end{proof}

\begin{proof}[Proof of Theorem \ref{th:biasvar2}]
\bse
&&\cov\left\{n^{-1}\sumi\frac{K_h(A_i-a)Y_i}{\pi(a,\X_i,\wh\bb)},
n^{-1}\sumi\frac{K_h(A_i-b)Y_i}{\pi(b,\X_i,\wh\bb)}\right\}\\
&=&E\left[\left\{n^{-1}\sumi\frac{K_h(A_i-a)Y_i}{\pi(a,\X_i,\wh\bb)}\right\}
\left\{n^{-1}\sumi\frac{K_h(A_i-b)Y_i}{\pi(b,\X_i,\wh\bb)}\right\}\right]\\
&&-E\left\{n^{-1}\sumi\frac{K_h(A_i-a)Y_i}{\pi(a,\X_i,\wh\bb)}\right\}
E\left\{n^{-1}\sumi\frac{K_h(A_i-b)Y_i}{\pi(b,\X_i,\wh\bb)}\right\}\\
&=&n^{-2}\sumi E\left\{\frac{K_h(A_i-a)K_h(A_i-b)Y_i^2
}{\pi(a,\X_i,\wh\bb)\pi(b,\X_i,\wh\bb)}\right\}+
n^{-2}\sum_{i\ne j, i.j=1}^n E\left\{\frac{K_h(A_i-a)Y_i}{\pi(a,\X_i,\wh\bb)}
\frac{K_h(A_j-b)Y_j}{\pi(b,\X_j,\wh\bb)}\right\}\\
&&-E\left\{\frac{K_h(A_i-a)Y_i}{\pi(a,\X_i,\wh\bb)}\right\}
E\left\{\frac{K_h(A_i-b)Y_i}{\pi(b,\X_i,\wh\bb)}\right\}\\
&=&n^{-1} E\left\{\frac{K_h(A_i-a)K_h(A_i-b)Y_i^2
}{\pi(a,\X_i,\wh\bb)\pi(b,\X_i,\wh\bb)}\right\}
-n^{-1} E\left\{\frac{K_h(A_i-a)Y_i}{\pi(a,\X_i,\wh\bb)}\right\}
E\left\{\frac{K_h(A_i-b)Y_i}{\pi(b,\X_i,\wh\bb)}\right\}\\
&=&n^{-1} E\left\{\frac{K_h(A_i-a)K_h(A_i-b)Y_i^2
}{\pi(a,\X_i,\wh\bb)\pi(b,\X_i,\wh\bb)}\right\}
-n^{-1}E\{\wh\theta(a)\}E\{\wh\theta(b)\}\\
&=&n^{-1} E\left\{\frac{K_h(A_i-a)K_h(A_i-b)Y_i^2
}{\pi(a,\X_i,\wh\bb)\pi(b,\X_i,\wh\bb)}\right\}
-n^{-1}\{\theta(a)\theta(b)+O(h^2)\}.
\ese
When $a$ and $b$ are sufficiently close, so that
$c\equiv(a-b)/h\in(-2,1)$, we have
\bse
&&E\left\{\frac{K_h(A_i-a)K_h(A_i-b)Y_i^2}{\pi(a,\X_i,\wh\bb)\pi(b,\X_i,\wh\bb)}\right\}\n\\
&=&E\left\{\frac{K_h(A_i-a)K_h(A_i-b)
\{m^2(A_i,\X_i)+\sigma^2(A_i,\X_i)\}
}{\pi(a,\X_i,\wh\bb)\pi(b,\X_i,\wh\bb)}\right\}\\
&=&h^{-1}E\int_0^1\frac{K(t)K(t+c)\{m^2(a+ht,\X_i)+\sigma^2(a+ht,\X_i)\}}
{\pi(a,\X_i,\wh\bb)\pi(b,\X_i,\wh\bb)}\pi_0(a+ht,\X_i)dt\\
&=&h^{-1}E\int_0^1
\frac{K(t)K(t+c)\{m^2(a,\X_i)+\sigma^2(a,\X_i)\}}
{\pi(a,\X_i,\wh\bb)\pi(b,\X_i,\wh\bb)}\pi_0(a,\X_i)dt\\
&&+E\int_0^1K(t)K(t+c)
\left\{2m(a,\X_i)m'_a(a,\X_i) \pi_0(a,\X_i)+m^2(a,\X_i) \pi_{0a}'(a,\X_i)\right.\\
&&\left.+2\sigma(a,\X_i)\sigma'_a(a,\X_i) \pi_0(a,\X_i)+\sigma^2(a,\X_i) \pi_{0a}'(a,\X_i)\right\}t/\{\pi(a,\X_i,\wh\bb)\pi(b,\X_i,\wh\bb)\}
dt+O(h)\\
&=&h^{-1}E\int_0^1
\frac{K(t)K(t+c)\{m^2(a,\X_i)+\sigma^2(a,\X_i)\}}
{\pi(a,\X_i,\bb^*)\pi(b,\X_i,\bb^*)}\pi_0(a,\X_i)dt\\
&&+E\int_0^1K(t)K(t+c)
\left\{2m(a,\X_i)m'_a(a,\X_i) \pi_0(a,\X_i)+m^2(a,\X_i) \pi_{0a}'(a,\X_i)\right.\\
&&\left.+2\sigma(a,\X_i)\sigma'_a(a,\X_i) \pi_0(a,\X_i)+\sigma^2(a,\X_i) \pi_{0a}'(a,\X_i)\right\}t/\{\pi(a,\X_i,\bb^*)\pi(b,\X_i,\bb^*)\}
dt\\
&&+O(h+h^{-1}n^{-1/2}).
\ese
Note that when $c\notin(-2,1)$, $K(t)K(t+c)=0$ for all $t\notin[-1,1]$ hence
the above expression still holds. 
Thus, we obtain
\bse
&&\cov\{\wh\theta(a),\wh\theta(b)\}\\
&=&
(nh)^{-1}E\int_0^1
\frac{K(t)K(t+c)\{m^2(a,\X_i)+\sigma^2(a,\X_i)\}}
{\pi(a,\X_i,\bb^*)\pi(b,\X_i,\bb^*)}\pi_0(a,\X_i)dt\\
&&+n^{-1}E\int_0^1K(t)K(t+c)
\left\{2m(a,\X_i)m'_a(a,\X_i) \pi_0(a,\X_i)+m^2(a,\X_i) \pi_{0a}'(a,\X_i)\right.\\
&&\left.+2\sigma(a,\X_i)\sigma'_a(a,\X_i) \pi_0(a,\X_i)+\sigma^2(a,\X_i) \pi_{0a}'(a,\X_i)\right\}t/\{\pi(a,\X_i,\bb^*)\pi(b,\X_i,\bb^*)\}
dt\\
&&-n^{-1}\theta(a)\theta(b)
+O(n^{-1}h+h^{-1}n^{-3/2}).
\ese

The asymptotic normality is shown in Section \ref{app:contas} below.
\end{proof}

\subsubsection{Asymptotic distribution of $\wh\theta(a)$}\label{app:contas}
\begin{proof}[Proof of asymptotic normality, Theorems \ref{th:a}-\ref{th:biasvar2}]
	When (\ref{eq:contpispec}) is correct, define
\bse
{\rm bias}\{\wh\theta(a)\}=\frac{h^2}{2}E\left[
\frac{\partial^2 \{\pi_0(a,\X_i) m(a,\X_i)\}}{\pi_0(a,\X_i)\partial
  a^2}\right]
\int
t^2K(t)dt.
\ese
On the other hand, when (\ref{eq:contmspec}) is correct, define
\bse
{\rm bias}\{\wh\theta(a)\}&=&\frac{h^2}{2}E
\left[\frac{\partial^2 \{\pi_0(a,\X_i)m(a,\X_i)\}}{\pi(a,\X_i,\bb^*) \partial a^2
}\right]\int t^2K(t)dt.
\ese
Regardless (\ref{eq:contpispec})  or (\ref{eq:contmspec}) is correct,
define
\bse
\var(\wh\theta)
=\frac{\int K^2(t)dt}{nh}E\left[
\frac{\pi_0(a,\X_i)\{m^2(a,\X_i)+\sigma^2(a,\X_i)\}}{\pi^2(a,\X_i,\bb^*)}
\right].
\ese
Note that when (\ref{eq:contpispec}) is correct, it degenerates to
\bse
\var\{\wh\theta(a)\}=\frac{\int K^2(t)dt}{nh}E\left\{
\frac{m^2(a,\X_i)+\sigma^2(a,\X_i)}{\pi_0(a,\X_i)}
\right\}.
\ese
Then
\bse
&&\left[\wh\theta(a)-\theta(a)-{\rm bias}\{\wh\theta(a)\}\right]\\
&=&n^{-1}\sumi \frac{K_h(A_i-a)Y_i}{\pi(a,\X_i,\bb^*)}
-\theta(a)
-{\rm bias}\{\wh\theta(a)\} +O_p(n^{-1/2})\\
&=&n^{-1}\sumi \left[\frac{K_h(A_i-a)Y_i}{\pi(a,\X_i,\bb^*)}
-E\left\{\frac{K_h(A_i-a)Y_i}{\pi(a,\X_i,\bb^*)}\right\}\right]
+O_p(h^4+n^{-1/2}).
\ese
Thus, when $n\to\infty$, following the variance result, we get that
\bse
\sqrt{nh}\left[\wh\theta(a)-\theta(a)-{\rm bias}\{\wh\theta(a)\} 
\right]
\ese
converges to a normal distribution with mean zero and variance
$nh\var\{\wh\theta(a)\}$.

Consider an arbitrary linear combination
$\sum_{j=1}^Jc_j\wh\theta(a_j)$.
Then
\bse
&&\left[\sum_{j=1}^Jc_j\wh\theta(a_j)-\sum_{j=1}^Jc_j\theta(a_j)-{\rm bias}\left\{\sum_{j=1}^Jc_j\wh\theta(a_j)\right\}\right]\\
&=&n^{-1}\sumi \sum_{j=1}^Jc_j\frac{K_h(A_i-a_j)Y_i}{\pi(a_j,\X_i,\bb^*)}
-\sum_{j=1}^Jc_j\theta(a_j)
-\sum_{j=1}^Jc_j{\rm bias}\{\wh\theta(a)\} +O_p(n^{-1/2})\\
&=&n^{-1}\sumi \sum_{j=1}^Jc_j\left[\frac{K_h(A_i-a_j)Y_i}{\pi(a_j,\X_i,\bb^*)}
-E\left\{\frac{K_h(A_i-a_j)Y_i}{\pi(a_j,\X_i,\bb^*)}\right\}\right]
+O_p(h^4+n^{-1/2})
\ese
converges to a normal distribution with mean zero. To compute its
variance, we compute $\cov\{\wh\theta(a),\wh\theta(b)\}$ for arbitrary
$a, b$ below.

Let $\cov\{\wh\theta(a),\wh\theta(b)\}$ be given as the leading term in 
(\ref{eq:cov}). Note that when (\ref{eq:contpispec}) is correct, it
degenerates to
\bse
&&\cov\{\wh\theta(a),\wh\theta(b)\}\\
&=&
(nh)^{-1}E\int_0^1
\frac{K(t)K(t+c)\{m^2(a,\X_i)+\sigma^2(a,\X_i)\}}
{\pi_0(b,\X_i)}dt\\
&&+n^{-1}E\int_0^1K(t)K(t+c)
\left\{2m(a,\X_i)m'_a(a,\X_i) \pi_0(a,\X_i)+m^2(a,\X_i) \pi_{0a}'(a,\X_i)\right.\\
&&\left.+2\sigma(a,\X_i)\sigma'_a(a,\X_i) \pi_0(a,\X_i)+\sigma^2(a,\X_i) \pi_{0a}'(a,\X_i)\right\}t/\{\pi_0(a,\X_i)\pi_0(b,\X_i)\}
dt\\
&&-n^{-1}\theta(a)\theta(b).
\ese
Here $c=(a-b)/h$. Then the above analysis leads to that
$
\wh\theta(a)-\theta(a)
$
is asymptotically a Gaussian process with mean 
given by ${\rm bias}\{\wh\theta(a)\}$ and variance-covariance function
given in $\cov\{\wh\theta(a),\wh\theta(b)\}$.
\end{proof}

\subsubsection{Variance estimation}\label{app:varest}
By Theorem \ref{th:biasvar1},
\bse
\var\{\wh\theta(a)\}
&=& 
(nh)^{-1}E\int_0^1
\frac{K(t)^2\{m^2(a,\X_i)+\sigma^2(a,\X_i)\}}
{\pi(a,\X_i,\bb^*)^2}\pi_0(a,\X_i)dt
+ O(n^{-1} + n^{-1}h + n^{-3/2}h^{-1}) \\
&=&
\frac{\int_0^1 K(t)^2 dt}{nh}
E\Bigg[
\frac{\pi_0(a,\X_i)\{m^2(a,\X_i)+\sigma^2(a,\X_i)\}}
{\pi(a,\X_i,\bb^*)^2} \Bigg]
+ O(n^{-1} + n^{-1}h + n^{-3/2}h^{-1}).
\ese
Thus, an estimator of this variance is obtained as 
\bse
\wh\var\{\wh\theta(a)\}
&=& \frac{\int_0^1 K(t)^2 dt}{nh} n^{-1}\sumi\Bigg[
\frac{K_h(A_i-a)
\left\{\wh m^2(A_i,\X_i)+\{Y_i-\wh m(A_i,\X_i)\}^2\right\}}
{\pi(a,\X_i,\wh \bb)^2} \Bigg].
\ese
Let $\wh m(A_i,\X_i) = Y_i$. Then the above estimator becomes
\bse
\wh\var\{\wh\theta(a)\}
&=& 
\frac{\int_0^1 K(t)^2 dt}{nh} n^{-1}\sumi
\frac{K_h(A_i-a)Y_i^2}{\pi(a,\X_i,\wh \bb)^2}.
\ese
Its expectation is
\bse
&&
\frac{\int_0^1 K(t)^2 dt}{nh} E\left\{n^{-1}\sumi
\frac{K_h(A_i-a)Y_i^2}{\pi(a,\X_i,\wh \bb)^2} \right\} \\
&=&
\frac{\int_0^1 K(t)^2 dt}{nh} E\left\{n^{-1}\sumi
\frac{K_h(A_i-a)Y_i^2}{\pi(a,\X_i,\bb^*)^2}
+ O(n^{-1/2}) \right\} \\
&=&
\frac{\int_0^1 K(t)^2 dt}{nh} E\left\{
\frac{K_h(A_i-a)Y_i^2}{\pi(a,\X_i,\bb^*)^2} \right\}
+ O(n^{-3/2}h^{-1})\\
&=&
\frac{\int_0^1 K(t)^2 dt}{nh} E\Bigg[
\frac{K_h(A_i-a)
\left\{m^2(A_i,\X_i) + \sigma^2(A_i,\X_i)\right\}}
{\pi(a,\X_i,\bb^*)^2} \Bigg]
+ O(n^{-3/2}h^{-1})\\
&=&
\frac{\int_0^1 K(t)^2 dt}{nh} \Bigg(E\bigg[
\frac{\pi_0(a,\X_i)
\left\{m^2(a,\X_i) + \sigma^2(a,\X_i)\right\}}
{\pi(a,\X_i,\bb^*)^2} \bigg] \\
&&+ \frac{\int t^2K(t)dt}{2}h^2 \times E
\left[\frac{\partial^2}{\partial a^2} 
\frac{\pi_0(a,\X_i)
\left\{m^2(a,\X_i) + \sigma^2(a,\X_i)\right\}}
{\pi(a,\X_i,\bb^*)^2} \right] \Bigg)
+ O(n^{-3/2}h^{-1}) \\
&=&
\frac{\int_0^1 K(t)^2 dt}{nh} E\bigg[
\frac{\pi_0(a,\X_i)
\left\{m^2(a,\X_i) + \sigma^2(a,\X_i)\right\}}
{\pi(a,\X_i,\bb^*)^2} \bigg]
+ O(n^{-1}h+n^{-3/2}h^{-1}).
\ese
Following Remark \ref{rem:con}, an alternative variance estimator is:
\bse
\wh \var \{ \wh \theta(a) \} = \frac{\int_0^1 K(t)^2 dt}{nh}
\left\{ \sumi  \frac{K_h(A_i-a)}{\pi(a,\X_i,\wh \bb)} \right\}^{-1}
\sumi  \frac{K_h(A_i-a)Y_i^2}{\pi(a,\X_i,\wh \bb)^2} .
\ese
\newpage

\section{Simulation results: Tables}\label{app:tab}
\begin{table}[h]
\begin{center}
\caption{Results based on 1000 replicates for the estimation of
  contrasts $\theta_k-\theta_0$,
$k=1,2,3$ with balancing estimator
  proposed using model $\pi(\cdot)$ and basis of $m(\cdot)$, which are
  either correctly specified or misspecified. Last  blocks contain
  maximum likelihood based IPW (ML-IPW) and augmented IPW (DR) estimators. Sample size $n=500$.}\vskip 0.2cm 
\label{tab:cate1}
\begin{tabular}{cccccc}
\hline
$\theta_k-\theta_0$&bias&sd&MSE&$\wh{\rm sd}$&95\%\\ \hline
&\multicolumn{5}{c}{$m$, $\pi$ correct}\\ 
$k=1$&0.3160&2.6185&6.9566&2.6078&0.9520\\
$k=2$&0.3211&2.6183&6.9586&2.6073&0.9510\\
$k=3$&0.3167&2.6173&6.9503&2.6075&0.9520\\
&\multicolumn{5}{c}{$\pi$ correct}\\ 
$k=1$&1.3666&7.4357&57.1567&6.2238&0.9110\\
$k=2$&1.2198&7.1377&52.4345&5.7876&0.8940\\
$k=3$&1.3181&7.0207&51.0281&5.7158&0.9000\\
&\multicolumn{5}{c}{$m$ correct}\\ 
$k=1$&2.1145&3.4709&16.5182&3.5342&0.9550\\
$k=2$&2.1204&3.4748&16.5701&3.5341&0.9560\\
$k=3$&2.1154&3.4711&16.5235&3.5339&0.9530\\
&\multicolumn{5}{c}{$m$, $\pi$ misspecified}\\
$k=1$&3.2163&7.8904&72.6030&7.0024&0.9150\\
$k=2$&3.0839&7.6868&68.5982&6.5804&0.9020\\
$k=3$&3.1900&7.4916&66.3006&6.5237&0.9060\\
&\multicolumn{5}{c}{ML-IPW, $\pi$ correct}\\
$k=1$& 0.0842&16.5578&274.1668&16.3236&0.9650\\
$k=2$&0.4053&14.3483&206.0379&14.0882&0.9530\\
$k=3$&0.1948&14.0600&197.7213&14.0238&0.9520\\
&\multicolumn{5}{c}{DR, $m$, $\pi$ correct}\\
$k=1$&0.040&2.352&5.533&2.451&0.962\\
$k=2$&0.045&2.351&5.529&2.450&0.962\\
$k=3$&0.041&2.349&5.520&2.450&0.964\\
\hline
\end{tabular}
\end{center}
\end{table}

\begin{table}[h]
\begin{center}
\caption{Results based on 1000 replicates for the estimation of
  contrasts $\theta_k-\theta_0$,
$k=1,2,3$ with balancing estimator
  proposed using model $\pi(\cdot)$ and basis of $m(\cdot)$, which are
  either correctly specified or misspecified. Last  blocks contain
  maximum likelihood based IPW (ML-IPW) and augmented IPW (DR) estimators. Sample size $n=1000$.}\vskip 0.2cm 
\label{tab:cate2}
\begin{tabular}{cccccc}
\hline
$\theta_k-\theta_0$&bias&sd&MSE&$\wh{\rm sd}$&95\%\\ \hline
&\multicolumn{5}{c}{$m$, $\pi$ correct}\\ 
$k=1$&0.1233&1.9123&3.6720&1.8477&0.9380\\
$k=2$&0.1273&1.9111&3.6686&1.8472&0.9370\\
$k=3$&0.1233&1.9092&3.6604&1.8471&0.9380\\
&\multicolumn{5}{c}{$\pi$ correct}\\ 
$k=1$&0.3756&5.1489&26.6518&4.4066&0.9160\\
$k=2$&0.4287&4.7061&22.3316&4.0946&0.9070\\
$k=3$&0.3302&4.7935&23.0868&4.0950&0.9110\\
&\multicolumn{5}{c}{$m$ correct}\\ 
$k=1$&1.2285&2.2205&6.4397&2.2226&0.9360\\
$k=2$&1.2325&2.2225&6.4588&2.2222&0.9350\\
$k=3$&1.2284&2.2206&6.4400&2.2220&0.9360\\
&\multicolumn{5}{c}{$m$, $\pi$ misspecified}\\
$k=1$&1.4565&5.4090&31.3788&4.6882&0.9080\\
$k=2$&1.5062&4.9498&26.7694&4.3911&0.9050\\
$k=3$&1.4004&5.0466&27.4296&4.3925&0.9150\\
&\multicolumn{5}{c}{ML-IPW, $\pi$ correct}\\
$k=1$&0.0974&11.5132&132.5634&10.8010&0.9540\\
$k=2$&0.2635&10.2896&105.9450&9.4923&0.9510\\
$k=3$&0.0573&10.4489&109.1838&9.4719&0.9480\\
&\multicolumn{5}{c}{DR, $m$, $\pi$ correct}\\
$k=1$&0.048&1.747&3.056&1.737&0.947\\
$k=2$&0.052&1.747&3.054&1.736&0.947 \\
$k=3$&0.048&1.746&3.050&1.736&0.949\\ \hline
\end{tabular}
\end{center}
\end{table}

\begin{table}[h]
\begin{center}
\caption{Results based on 1000 replicates for the estimation of
  contrasts $\theta_k-\theta_0$,
$k=1,2,3$ with balancing estimator
  proposed using model $\pi(\cdot)$ and basis of $m(\cdot)$, which are
  either correctly specified or misspecified. Last  blocks contain
  maximum likelihood based IPW (ML-IPW) and augmented IPW (DR) estimators. Sample size $n=2000$.}\vskip 0.2cm 
\label{tab:cate3}
\begin{tabular}{cccccc}
\hline
$\theta_k-\theta_0$&bias&sd&MSE&$\wh{\rm sd}$&95\%\\ \hline
&\multicolumn{5}{c}{$m$, $\pi$ correct}\\ 
$k=1$&0.0147&1.2971&1.6826&1.3063&0.9490\\
$k=2$&0.0147&1.2972&1.6830&1.3059&0.9510\\
$k=3$&0.0125&1.2972&1.6830&1.3059&0.9520\\
&\multicolumn{5}{c}{$\pi$ correct}\\ 
$k=1$&0.1837&3.5871&12.9007&3.2328&0.9310\\
$k=2$&0.1936&3.3857&11.5003&3.0257&0.9220\\
$k=3$&0.1522&3.3617&11.3241&3.0269&0.9310\\
&\multicolumn{5}{c}{$m$ correct}\\ 
$k=1$&0.7568&1.4234&2.5987&1.4744&0.9450\\
$k=2$&0.7566&1.4232&2.5980&1.4740&0.9460\\
$k=3$&0.7541&1.4243&2.5975&1.4740&0.9460\\
&\multicolumn{5}{c}{$m$, $\pi$ misspecified}\\
$k=1$&0.9441&3.6714&14.3704&3.3614&0.9190\\
$k=2$&0.9392&3.4964&13.1066&3.1605&0.9140\\
$k=3$&0.8885&3.4607&12.7659&3.1639&0.9290\\
&\multicolumn{5}{c}{ML-IPW, $\pi$ correct}\\
$k=1$&-0.0998&7.1859&51.6464&7.2091&0.9460\\
$k=2$&0.1173&6.3511&40.3504&6.3572&0.9460\\
$k=3$&0.1109&6.3369&40.1689&6.3598&0.9420\\
&\multicolumn{5}{c}{DR, $m$, $\pi$ correct}\\
$k=1$&-0.006&1.208&1.459&1.229& 0.962\\
$k=2$&-0.006&1.209&1.461&1.228& 0.958\\
$k=3$&-0.008&1.208&1.460&1.228& 0.959 \\ \hline
\end{tabular}
\end{center}
\end{table}

\begin{table}[h]
    \caption{Results based on 1000 replicates for continuous treatment case, and nonlinear outcome model. Integrated absolute bias and  integrated RMSE (in parentheses). ML-IPW is the maximum likelihood based IPW estimator and CB-IPW the robust balancing-IPW method proposed (\ref{eq:betaconint}-\ref{eq:thetacon}).} 
    \label{tab:nonlinear}
    \begin{center}
    $n=500$ \\
    \vspace{0.5em}
    \begin{tabular}{c|c|cccc}
\hline
\multicolumn{2}{c|}{}&$\pi, m$ correct&$\pi$ correct&$m$ correct &none correct\\
\hline
\multicolumn{2}{c|}{IPW of Kennedy} &na  &3.33 (4.95)  &na  &3.00 (4.81)  \\
\multicolumn{2}{c|}{DR of Kennedy } &1.09 (3.31)  &2.05 (3.75)  &1.07 (3.31)  &2.55 (4.02)  \\
\hline
\multicolumn{2}{c|}{} &\multicolumn{2}{c}{$\pi$ correct}&\multicolumn{2}{c}{none correct}\\
\hline
\multirow{3}{*}{ML-IPW}
&Constant, CV    &na  &0.52 (4.52)  &na  &1.21 (4.40)  \\
&Constant, OSCV  &na  &0.39 (4.23)  &na  &1.49 (4.42)  \\
&Linear, OSCV    &na  &0.40 (4.08)  &na  &1.99 (4.45)  \\
\hline
\multirow{3}{*}{CB-IPW}
&Constant, CV    &0.38 (4.24)  &0.26 (4.32)  &1.15 (4.18)  &1.23 (4.25)  \\
&Constant, OSCV  &0.28 (4.05)  &0.31 (4.18)  &1.41 (4.26)  &1.52 (4.35)  \\
&Linear, OSCV    &0.69 (3.91)  &0.82 (4.09)  &1.86 (4.22)  &1.99 (4.34)  \\
\hline
    \end{tabular}
    
    \vspace{1em}
    $n=1000$ \\
    \vspace{0.5em}
    \begin{tabular}{c|c|cccc}
\hline
\multicolumn{2}{c|}{}&$\pi, m$ correct&$\pi$ correct&$m$ correct &none correct\\
\hline
\multicolumn{2}{c|}{IPW of Kennedy} &na  &3.15 (4.11)  &na  &2.80 (3.91)  \\
\multicolumn{2}{c|}{DR of Kennedy } &0.97 (2.60)  &1.88 (3.16)  &0.94 (2.37)  &2.36 (3.28)  \\
\hline
\multicolumn{2}{c|}{} &\multicolumn{2}{c}{$\pi$ correct}&\multicolumn{2}{c}{none correct}\\
\hline
\multirow{3}{*}{ML-IPW}
&Constant, CV    &na  &0.39 (3.26)  &na  &1.32 (3.30)  \\
&Constant, OSCV  &na  &0.46 (2.88)  &na  &1.42 (3.23)  \\
&Linear, OSCV    &na  &0.48 (2.80)  &na  &1.96 (3.41)  \\
\hline
\multirow{3}{*}{CB-IPW}
&Constant, CV    &0.27 (3.08)  &0.20 (3.15)  &1.27 (3.13)  &1.34 (3.19)  \\
&Constant, OSCV  &0.29 (2.78)  &0.20 (2.89)  &1.37 (3.08)  &1.46 (3.17)  \\
&Linear, OSCV    &0.68 (2.72)  &0.69 (2.88)  &1.85 (3.20)  &1.97 (3.32)  \\
\hline
    \end{tabular}
    
    \vspace{1em}
    $n=2000$ \\
    \vspace{0.5em}
    \begin{tabular}{c|c|cccc}
\hline
\multicolumn{2}{c|}{}&$\pi, m$ correct&$\pi$ correct&$m$ correct &none correct\\
\hline
\multicolumn{2}{c|}{IPW of Kennedy} &na  &3.02 (3.62)  &na  &2.65 (3.44)  \\
\multicolumn{2}{c|}{DR of Kennedy } &0.79 (1.83)  &1.76 (2.58)  &0.78 (1.81)  &2.37 (3.82)  \\
\hline
\multicolumn{2}{c|}{} &\multicolumn{2}{c}{$\pi$ correct}&\multicolumn{2}{c}{none correct}\\
\hline
\multirow{3}{*}{ML-IPW}
&Constant, CV    &na  &0.33 (2.44)  &na  &1.45 (2.76)  \\
&Constant, OSCV  &na  &0.56 (2.09)  &na  &1.41 (2.57)  \\
&Linear, OSCV    &na  &0.54 (1.97)  &na  &2.00 (2.89)  \\
\hline
\multirow{3}{*}{CB-IPW}
&Constant, CV    &0.22 (2.30)  &0.19 (2.43)  &1.41 (2.59)  &1.47 (2.66)  \\
&Constant, OSCV  &0.39 (1.95)  &0.26 (2.12)  &1.36 (2.39)  &1.44 (2.49)  \\
&Linear, OSCV    &0.66 (1.91)  &0.70 (2.15)  &1.91 (2.68)  &2.00 (2.81)  \\
\hline
    \end{tabular}
    
    Note: ``na" stands for ``not applicable".
    \end{center}
\end{table}

\begin{table}[htb]
    \caption{Results based on 1000 replicates for continuous treatment
      case, and linear outcome model. Integrated absolute bias and
      integrated RMSE (in parentheses).  ML-IPW is the maximum likelihood based IPW estimator and CB-IPW the robust balancing-IPW method proposed
      (\ref{eq:betaconint}-\ref{eq:thetacon}).}  
    \label{tab:linear}
    \begin{center}
    $n=500$ \\
    \vspace{0.5em}
    \begin{tabular}{c|c|cccc}
\hline
\multicolumn{2}{c|}{}&$\pi, m$ correct&$\pi$ correct&$m$ correct &none correct\\
\hline
\multicolumn{2}{c|}{IPW of Kennedy} &na  &3.02 (5.31)  &na  &2.58 (4.02)  \\
\multicolumn{2}{c|}{DR of Kennedy } &0.58 (2.60)  &0.72 (2.69)  &0.64 (2.55)  &0.90 (2.64)  \\
\hline
\multirow{3}{*}{ML-IPW}
&Constant, CV    &na  &0.26 (3.55)  &na  &0.28 (3.55)  \\
&Constant, OSCV  &na  &0.07 (3.64)  &na  &0.55 (3.74)  \\
&Linear, OSCV    &na  &0.18 (3.36)  &na  &0.68 (3.44)  \\
\hline
\multirow{3}{*}{CB-IPW}
&Constant, CV    &0.23 (3.29)  &0.17 (3.34)  &0.27 (3.21)  &0.29 (3.28)  \\
&Constant, OSCV  &0.12 (3.55)  &0.21 (3.58)  &0.53 (3.56)  &0.56 (3.58)  \\
&Linear, OSCV    &0.25 (3.23)  &0.33 (3.27)  &0.65 (3.26)  &0.68 (3.30)  \\
\hline
    \end{tabular}
    
    \vspace{0.5em}
    $n=1000$ \\
    \vspace{0.5em}
    \begin{tabular}{c|c|cccc}
\hline
\multicolumn{2}{c|}{}&$\pi, m$ correct&$\pi$ correct&$m$ correct &none correct\\
\hline
\multicolumn{2}{c|}{IPW of Kennedy} &na  &2.96 (4.82)  &na  &2.55 (3.33)  \\
\multicolumn{2}{c|}{DR of Kennedy } &0.44 (1.92)  &0.62 (1.97)  &0.48 (1.85)  &0.78 (1.98)  \\
\hline
\multirow{3}{*}{ML-IPW}
&Constant, CV    &na  &0.29 (2.55)  &na  &0.27 (2.52)  \\
&Constant, OSCV  &na  &0.10 (2.52)  &na  &0.46 (2.61)  \\
&Linear, OSCV    &na  &0.07 (2.31)  &na  &0.58 (2.43)  \\
\hline
\multirow{3}{*}{CB-IPW}
&Constant, CV    &0.23 (2.34)  &0.19 (2.39)  &0.26 (2.28)  &0.26 (2.32)  \\
&Constant, OSCV  &0.04 (2.43)  &0.05 (2.46)  &0.44 (2.46)  &0.44 (2.48)  \\
&Linear, OSCV    &0.15 (2.21)  &0.16 (2.27)  &0.56 (2.27)  &0.56 (2.31)  \\
\hline
    \end{tabular}
    
    \vspace{0.5em}
    $n=2000$ \\
    \vspace{0.5em}
    \begin{tabular}{c|c|cccc}
\hline
\multicolumn{2}{c|}{}&$\pi, m$ correct&$\pi$ correct&$m$ correct &none correct\\
\hline
\multicolumn{2}{c|}{IPW of Kennedy} &na  &2.93 (3.44)  &na  &2.43 (2.97)  \\
\multicolumn{2}{c|}{DR of Kennedy } &0.41 (1.45)  &0.60 (1.55)  &0.43 (1.40)  &0.75 (1.57)  \\
\hline
\multirow{3}{*}{ML-IPW}
&Constant, CV    &na  &0.22 (1.84)  &na  &0.32 (1.84)  \\
&Constant, OSCV  &na  &0.12 (1.79)  &na  &0.42 (1.85)  \\
&Linear, OSCV    &na  &0.09 (1.70)  &na  &0.57 (1.80)  \\
\hline
\multirow{3}{*}{CB-IPW}
&Constant, CV    &0.18 (1.72)  &0.15 (1.74)  &0.29 (1.66)  &0.29 (1.67)  \\
&Constant, OSCV  &0.08 (1.72)  &0.06 (1.76)  &0.40 (1.74)  &0.40 (1.76)  \\
&Linear, OSCV    &0.14 (1.61)  &0.14 (1.65)  &0.54 (1.68)  &0.55 (1.71)  \\
\hline
    \end{tabular}
    
    Note: ``na" stands for ``not applicable".
    \end{center}
    
\end{table}

\end{document}